\numberwithin{equation}{section}
\newtheorem{Theorem}{Theorem}[section]
\newtheorem*{Theorem*}{Theorem}
\newtheorem{Corollary}[Theorem]{Corollary}
\newtheorem{Lemma}[Theorem]{Lemma}
\newtheorem{Conjecture}[Theorem]{Conjecture}
 { \theoremstyle{definition}

\newtheorem{Remark}[Theorem]{Remark} }
\newcommand{\onframe}{\purple{a}}
\newcommand{\twodimN}{\purple{\mathbf{M}^2}}
\newcommand{\mygb}{\red{b}}
\newcommand{\ominf}{\blue{\omega}}
\newcommand{\mcone}{m_{c_1}}
\newcommand{\mctwo}{m_{c_2}}
\newcommand{\mca}{m_{c_a}}
\newcommand{\hhi}{{\red{\hat h_{i}}}}
\newcommand{\hRi}{{\red{\hat R_{i}}}}
\newcommand{\zui}{{\red{\hat u_{i}}}}
\newcommand{\zMone}{\red{{\hat M}_{1,i}}}
\newcommand{\zMtwo}{\red{{\hat M}_{2,i}}}
\newcommand{\hato}{\red{\omega}}
\newcommand{\hmv}{\red{\hat \omega}}
\newcommand{\repsilon}{ {\varepsilon}}
\newcommand{\rhozero}{\red{\rho_0}}
\newcommand{\purple}[1]{{\color{purple}#1}}
\newcommand{\horo}{\red{\mathfrak{h}}}
\newcommand{\hk}{\red{h_{k}}}
\newcommand{\ptcheck}[1]{\ptc{checked on #1}}
\newcommand{\blue}[1]{{\color{blue}#1}}
\newcommand{\red}[1]{{\color{red}#1}}
\newcounter{mnotecount}[section]
\renewcommand{\themnotecount}{\thesection.\arabic{mnotecount}}
\newcommand{\mnote}[1]
{\protect{\stepcounter{mnotecount}}$^{\mbox{\footnotesize
$
\bullet$\themnotecount}}$ \marginpar{
\raggedright\tiny\em
$\!\!\!\!\!\!\,\bullet$\themnotecount: #1} }
\newcommand{\jlcax}[1]{}
\newcommand{\eean}{\nonumber\end{eqnarray}}
\newcommand{\kk}[1]{}
\newcommand{\beq}{\begin{equation}}
\newcommand{\T}{{\mathbb T}}
\newcommand{\FS} 
 {F}
\newcommand{\HS} 
 {H_{\mbox{\scriptsize volume}}}
\newcommand{\zomega}{\mathring{\omega}}%
\newcommand{\eeal}[1]{\label{#1}\end{eqnarray}}
\newcommand{\bed}{\begin{deqarr}}
\newcommand{\eed}{\end{deqarr}}
\newcommand{\bedl}[1]{\begin{deqarr}\label{#1}}
\newcommand{\eedl}[2]{\arrlabel{#1}\label{#2}\end{deqarr}}
\newcommand{\mcU}{{\mycal U}}
\newcommand{\bel}[1]{\begin{equation}\label{#1}}
\newcommand{\bea}{\begin{eqnarray}}
\newcommand{\bean}{\begin{eqnarray}\nonumber}
\newcommand{\beal}[1]{\begin{eqnarray}\label{#1}}
\newcommand{\eea}{\end{eqnarray}}
\def\typeout{:<+ #.tex}\include{#}\typeout{:<-}1{\typeout{:<+ #1.tex}\include{#1}\typeout{:<-}}
\newcommand{\be}{\begin{equation}}
\newcommand{\eeq}{\end{equation}}
\newcommand{\ee}{\end{equation}}
\newcommand{\beqa}{\begin{eqnarray}}
\newcommand{\eeqa}{\end{eqnarray}}
\newcommand{\beqan}{\begin{eqnarray*}}
\newcommand{\eeqan}{\end{eqnarray*}}
\newcommand{\ba}{\begin{array}}
\newcommand{\ea}{\end{array}}
\newcommand{\warn}[1]
{\protect{\stepcounter{mnotecount}}$^{\mbox{\footnotesize
$
\bullet$\themnotecount}}$ \marginpar{
\raggedright\tiny\em
$\!\!\!\!\!\!\,\bullet$\themnotecount: {\bf Warning:} #1} }
\newcommand{\R}{\mathbb R}
\newcommand{\N}{\mathbb N}
\newcommand{\ptc}[1]{\mnote{{\bf ptc:}#1}}
\newcommand{\beqar}{\begin{deqarr}}
\newcommand{\eeqar}{\end{deqarr}}
\newcommand{\beaa}{\begin{eqnarray*}}
\newcommand{\eeaa}{\end{eqnarray*}}
\newcommand{\hrho}{\hat\rho}
\newcommand{\mv}{\omega}
\newcommand{\mage}[1]{#1}
\DeclareFontFamily{OT1}{rsfs}{}
\DeclareFontShape{OT1}{rsfs}{CGNPm}{n}{ <-7> rsfs5 <7-10> rsfs7 <10-> rsfs10}{}
\DeclareMathAlphabet{\mycal}{OT1}{rsfs}{CGNPm}{n}
\renewcommand{\ptcheck}[1]{}
\renewcommand{\red}[1]{#1}
\renewcommand{\blue}[1]{#1}
\renewcommand{\purple}[1]{#1}
\begin{document}
\allowdisplaybreaks

\renewcommand{\thefootnote}{}

\newcommand{\arXivNumber}{2207.14563}

\renewcommand{\PaperNumber}{005}

\FirstPageHeading

\ShortArticleName{On Asymptotically Locally Hyperbolic Metrics with Negative Mass}

\ArticleName{On Asymptotically Locally Hyperbolic Metrics\\ with Negative Mass\footnote{This paper is a~contribution to the Special Issue on Differential Geometry Inspired by Mathematical Physics in honor of Jean--Pierre Bourguignon for his 75th birthday. The~full collection is available at \href{https://www.emis.de/journals/SIGMA/Bourguignon.html}{https://www.emis.de/journals/SIGMA/Bourguignon.html}}}

\Author{Piotr T.~CHRU\'SCIEL~$^{\rm a}$ and Erwann DELAY~$^{\rm b}$}

\AuthorNameForHeading{P.T.~Chru\'{s}ciel and E.~Delay}

\Address{$^{\rm a)}$~Faculty of Physics, University of Vienna, Boltzmanngasse 5, A~1090 Vienna, Austria}
\EmailD{\href{mailto:piotr.chrusciel@univie.ac.at}{piotr.chrusciel@univie.ac.at}}
\URLaddressD{\url{https://homepage.univie.ac.at/piotr.chrusciel}}

\Address{$^{\rm b)}$~Laboratoire de Math\'ematiques d'Avignon, Avignon Universit\'e, F-84916 Avignon\\
\hphantom{$^{\rm b)}$}~and F.R.U.M.A.M., CNRS, F-13331 Marseille, France}
\EmailD{\href{mailto:erwann.delay@univ-avignon.fr}{erwann.delay@univ-avignon.fr}}
\URLaddressD{\url{https://erwanndelay.wordpress.com}}

\ArticleDates{Received August 01, 2022, in final form January 17, 2023; Published online January 23, 2023}

\Abstract{We construct families of asymptotically locally hyperbolic Riemannian metrics with constant scalar curvature (i.e., time symmetric vacuum general relativistic initial data sets with negative cosmological constant), with prescribed topology of apparent horizons and of the conformal boundary at infinity, and with controlled mass. In particular we obtain new classes of solutions with negative mass.}

\Keywords{scalar curvature; asymptotically hyperbolic manifolds; negative mass}

\Classification{53C21; 83C05}

\begin{flushright}
\begin{minipage}{65mm}
\it Dedicated to Jean--Pierre Bourguignon\\ on the occasion of his 75th birthday
\end{minipage}
\end{flushright}

\renewcommand{\thefootnote}{\arabic{footnote}}
\setcounter{footnote}{0}

\section{Introduction}
Jean--Pierre Bourguignon made lasting contributions to differential geometry, to French mathematics, and to European research. Einstein metrics and their deformations are part of his research interests. This work is concerned with deformations of initial data for Lorentzian Einstein metrics, and it is a pleasure to dedicate to him this contribution to the subject.

In recent work~\cite{CDW} we derived a formula for the mass of three-dimensional asymptotically locally hyperbolic (ALH) manifolds obtained by gluing together two such manifolds ``at infinity''.
(This procedure is also known as ``Maskit gluing'', with the name introduced in~\cite{MazzeoPacardMaskit}.) We used the formula to prove {the main result there, namely} existence of conformally compactified manifolds without boundary, or with a toroidal black hole boundary, with conformal infinity of genus larger than {or} equal to 2, with constant scalar curvature (CSC) and with negative total mass.

The first step in~\cite{CDW} was to use a glue-in of an exactly hyperbolic region near infinity
as done in~\cite{ChDelayExotic}, introducing a small perturbation parameter $\epsilon$, which will be referred to as the ``exotic-gluing parameter''. The key point of the analysis in~\cite{CDW} was to control the limit of the mass when~$\epsilon$ tends to zero. This was done for a symmetric gluing of
two ALH manifolds with identical toroidal boundaries at conformal infinity.

The object of this work is to extend the analysis of~\cite{CDW} to the gluing of any two CSC
ALH manifolds
with non-spherical topology at infinity. {Thus, we can control the mass for gluings that do not have to be mirror-symmetric anymore, the manifolds being glued do not have to be identical, they can contain arbitrarily many black holes with arbitrary topology, and they are allowed to have more complicated topology at infinity.}
We show that the mass of the manifold obtained by connecting-at-infinity two such manifolds tends to a well-defined limit when the exotic-gluing parameter $\epsilon$ tends to zero; {see \eqref{18XI22.1} below, which generalises the formula proved in~\cite{CDW} for two identical components.
This} formula allows one to control the sign of the mass, obtaining in particular the following result, where the manifold resulting from the gluing has at least two boundary components, one at infinity with genus $ {{\mathbf g}_\infty}$, and another one at finite distance with genus ${{\mathbf g}_{\mathrm{BH}}}$ (where ``BH'' stands for ``black hole''):

\begin{Theorem} \label{T7VI22.1}
 Let ${{\mathbf g}_{\mathrm{BH}}}, {{\mathbf g}_\infty} \in \N$, ${{\mathbf g}_\infty} \ge {{\mathbf g}_{\mathrm{BH}}}$, with ${{\mathbf g}_\infty} \ge 2$ if ${{\mathbf g}_{\mathrm{BH}}}=0$.
There exist conformally compactifiable ALH manifolds of constant scalar curvature with a boundary of genus ${{\mathbf g}_{\mathrm{BH}}}$ with vanishing mean curvature, {with a conformal boundary at infinity of genus} ${{\mathbf g}_\infty} $, and with mass of {any} prescribed sign.
\end{Theorem}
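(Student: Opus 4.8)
The plan is to realise the required manifolds by the connecting-at-infinity (Maskit gluing) construction of \cite{CDW}, applied to two constant-scalar-curvature (CSC) building blocks chosen so that one of them already carries the desired apparent horizon. Because the exotic glue-in and the gluing neck are both supported in a neighbourhood of conformal infinity, any minimal boundary sitting at finite distance in a block is left untouched by the construction; this is what will furnish the genus-${\mathbf g}_{\mathrm{BH}}$ boundary with vanishing mean curvature. The mass formula \eqref{18XI22.1} will then be used to show that the total mass of the glued manifold converges, as the exotic-gluing parameter $\epsilon\to0$, to an explicit number which we can force to be positive, negative, or zero by tuning the blocks.

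For the first block I would take $M_1$ to be the time-symmetric slice of a generalised Kottler (Schwarzschild--anti-de~Sitter) metric whose horizon is a surface of genus ${\mathbf g}_{\mathrm{BH}}$: this is a CSC ALH manifold with a totally geodesic, hence minimal, boundary of genus ${\mathbf g}_{\mathrm{BH}}$, with conformal infinity also of genus ${\mathbf g}_{\mathrm{BH}}$, and with a free mass parameter $m_1$. For the second block I would take $M_2$ to be a complete CSC ALH manifold without finite boundary whose conformal infinity has genus ${\mathbf g}_\infty-{\mathbf g}_{\mathrm{BH}}$ (a convex-cocompact hyperbolic manifold when this genus is positive, and $\mathbb{H}^3$ when it is zero), carrying its own mass $m_2$. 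Connecting $M_1$ to $M_2$ at infinity forms the connected sum of the two conformal boundaries, so the resulting conformal infinity has genus ${\mathbf g}_{\mathrm{BH}}+({\mathbf g}_\infty-{\mathbf g}_{\mathrm{BH}})={\mathbf g}_\infty$; the hypothesis ${\mathbf g}_\infty\ge{\mathbf g}_{\mathrm{BH}}$ is exactly the condition that such an $M_2$ exists. The point of involving genus-$\ge2$ conformal infinities is that these admit solutions of arbitrarily negative mass (the hyperbolic ``negative-mass black holes''), which serve as the reservoir of negative mass; this is why the spherical case ${\mathbf g}_{\mathrm{BH}}=0$, in which the Kottler mass $m_1$ is necessarily positive and bounded away from zero, is coupled with the requirement ${\mathbf g}_\infty\ge2$.

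With the blocks fixed, I would follow \cite{ChDelayExotic,CDW} and first deform each metric near infinity to an exact model by the exotic glue-in, preserving the CSC condition (which for time-symmetric data is the only constraint to be maintained) and introducing the parameter $\epsilon$, and then perform the connecting-at-infinity in the common model region. For all sufficiently small $\epsilon>0$ this produces a CSC ALH manifold $M_\epsilon$ with a minimal boundary of genus ${\mathbf g}_{\mathrm{BH}}$ and conformal infinity of genus ${\mathbf g}_\infty$. One has to check, as in \cite{CDW}, that the whole deformation keeps the scalar curvature constant and does not disturb the horizon; both hold because the support of the construction is localised near infinity, far from the finite boundary.

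Finally I would apply \eqref{18XI22.1} to evaluate $\lim_{\epsilon\to0}$ of the mass of $M_\epsilon$ as an explicit function of $m_1$, $m_2$ and the gluing data, and argue that, since the gluing contributes a negative term (as already seen for the symmetric toroidal gluing in \cite{CDW}) while $m_1$ may be taken large and positive and, in the genus-$\ge2$ situations, $m_2$ may be taken arbitrarily negative, this limit ranges over all of $\R$; an intermediate-value argument then yields the value zero as well. Fixing block data realising a given sign and then choosing $\epsilon$ small enough gives $M_\epsilon$ with mass of the prescribed sign. I expect the main obstacle to lie precisely in this last step: establishing the convergence of the mass as $\epsilon\to0$ and verifying, through the generalised formula \eqref{18XI22.1}, that the combined gluing and block contributions really do attain both signs — most delicately the negative one in the case ${\mathbf g}_{\mathrm{BH}}=0$, where the unavoidable positive mass of the spherical Kottler block must be overcome by the genus-$\ge2$ side.
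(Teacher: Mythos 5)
Your overall skeleton (a Birmingham--Kottler block carrying the minimal boundary of genus ${\mathbf g}_{\mathrm{BH}}$, glued at infinity to a second summand supplying negative mass, with the sign read off from the limit formula \eqref{18XI22.1}) matches the paper's strategy, but two of your choices break down. First, your reservoir of negative mass is wrong. You claim that genus-$\ge 2$ conformal infinities ``admit solutions of arbitrarily negative mass (the hyperbolic negative-mass black holes)''. They do not: the higher-genus Kottler masses are bounded below by a critical value $m_{\mathrm{crit}}({\mathbf g})<0$ required for regularity, and a convex-cocompact hyperbolic manifold (your proposed $M_2$) is exactly hyperbolic, so it carries no free mass parameter at all --- there is no ``$m_2$'' to send to $-\infty$, and no argument that its mass is even negative. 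The paper's actual reservoir is the Horowitz--Myers metrics \eqref{21XI22.p2}, whose toroidal conformal infinity and freely prescribable parameter $m_c>0$ give contributions $-m_{c}\int_{\T^2}{\rm e}^{-\omega/2}{\rm d}\mu_{h_0}$ that are arbitrarily negative; accordingly Corollary~\ref{NoSC21V22.1a} glues the single Kottler block to ${\mathbf g}_\infty-{\mathbf g}_{\mathrm{BH}}$ Horowitz--Myers summands (each torus raising the genus at infinity by one), rather than to one summand of genus ${\mathbf g}_\infty-{\mathbf g}_{\mathrm{BH}}$. Without Horowitz--Myers (or something playing its role) you cannot force the total mass negative when the horizon-carrying block has ${\mathbf g}_{\mathrm{BH}}\in\{0,1\}$, where its own mass is necessarily positive.

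Second, your treatment of spherical conformal infinities is not licensed by the gluing theorem. Theorem~\ref{NoST29VII21.1} requires $n_a\ge 3$ punctures on any summand with $\kappa_a=1$, and the proof explicitly assumes neither summand is a sphere when gluing across a single neck; the introduction flags the single-neck sphere-to-nonsphere gluing as an open problem. Your ${\mathbf g}_{\mathrm{BH}}=0$ case (spherical Kottler glued across one neck to the genus-$\ge2$ block) and your ${\mathbf g}_\infty={\mathbf g}_{\mathrm{BH}}$ case ($M_2=\mathbb{H}^3$, spherical at infinity) both land exactly in this excluded configuration. The paper avoids it by a two-step construction: first produce, by a Maskit or Isenberg--Lee--Stavrov~\cite{ILS} gluing with uncontrolled mass, a CSC manifold with spherical horizon and \emph{toroidal} conformal infinity, and only then apply Theorem~\ref{NoST29VII21.1} to glue that manifold to a Horowitz--Myers metric of sufficiently negative mass. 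Two smaller points: formula \eqref{18XI22.1} contains no separate ``negative neck contribution'' --- the limit is just the sum of the summands' mass integrands reweighted by ${\rm e}^{-\omega_a/2}$, so the sign control comes entirely from the free parameters of the blocks, linearly, with no need for an intermediate-value argument; and a negative-mass higher-genus Kottler metric has its own horizon, so using one as your ``boundaryless'' $M_2$ would introduce an unwanted extra boundary component.
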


The condition of constant scalar curvature corresponds to vacuum general relativistic time-symmetric initial data sets; an identical construction can be done for initial data sets for time-symmetric data sets with prescribed energy density.

We note that boundaries with vanishing mean curvature typically lie inside, or at the boundary, of the intersection of a black hole region with a time-symmetric initial-data slice.

The restriction ${{\mathbf g}_\infty} \ge {{\mathbf g}_{\mathrm{BH}}}$ is necessary, cf.~\cite{GSWW}.

Theorem~\ref{T7VI22.1} is a slightly less precise version of Corollary~\ref{NoSC21V22.1a} below. This last corollary follows immediately from Theorem~\ref{NoST29VII21.1} below, which is the main result of this paper, and the proof of which occupies most of the remainder of this paper.

The question of controlling the mass when gluing-at-infinity two manifolds across a single neck, with one manifold having spherical topology at infinity and the other not, remains to be settled.

\section{Maskit gluing at general boundaries}
\label{NoSs29VII21.1}

We adapt and extend the arguments in \cite{CDW} to accomodate general conformal boundaries at infinity. A useful device used in \cite{CDW} was to glue together two identical copies of a single manifold in a mirror-symmetric way; the associated simplifications do not arise in our context, which creates various difficulties that we address here.

The notations of that last reference are used throughout. {The current work draws heavily on constructions in~\cite{CDW}, some of which are only mentioned or sketched here, but we give a detailed presentation of those steps of the analysis in~\cite{CDW} which require substantial modifications.}

In the case of two summands, the manifold $(M,g)$ will be obtained by a boundary-gluing of two three-dimensional ALH manifolds, $(M_1,g_1)$ and $(M_2,g_2)$. We assume existence of a~coordinate system near each conformal boundary at infinity in which the metric $g_a$ takes the form
\begin{gather}\label{NoS21V22.1}
 g_a = \underbrace{
 \frac{{\rm d}r^2}{r^2+\kappa_a} + r^2 h_{\kappa_a}
 }_{\mygb} + e_a ,
 \qquad
 r\ge r_0, \quad
 a=1,2,
\end{gather}
for some $r_0>0$, where
\begin{equation}\label{NoS21V22.3}
 |e_a|_{\mygb}
 + |D e_a|_{\mygb}
 + \big|D^2 e_a\big|_{\mygb}
 \le C r^{-\sigma}
\end{equation}
with constants $\sigma>5/2$ and $C>0$, where $h_\kappa$ has constant Gauss curvature $\kappa \in \{0,\pm1\}$
and where $D$ is the covariant derivative operator of $g$.
We use the subscript $\mygb$ on a norm to indicate that the norm is taken using the metric $\mygb$.
Equation~\eqref{NoS21V22.3} holds, with $\sigma = 3$, both for the Birmingham--Kottler metrics
\begin{equation}\label{21XI22.p1}
 g_{\text{BK}} = \frac{{\rm d}r^2}{r^2+\kappa_a - 2m_c/r} + r^2 h_{\kappa_a},
\end{equation}
and for the Horowitz--Myers metrics
\begin{equation}\label{21XI22.p2}
 g_{\text{HM}}
 = \frac{{\rm d}r^2}{r^2 - 2m_c/r} + \big(r^2 - 2m_c/r\big) {\rm d}\theta^2 + r^2 {\rm d}\psi^2,
\end{equation}
where $(\theta,\psi)$ are periodic coordinates on $S^1\times S^1$ (cf.~\cite{Birmingham,HorowitzMyers,Kottler} or \cite{ChruscielBHBook}). Here $m_c\in \R$ is a~parameter which we call the \emph{coordinate mass}.
The reader will have noticed that when referring to the Birmingham--Kottler metrics or the Horowitz--Myers metrics we mean the space-part of these metrics, i.e., the metric induced on the static slices of the associated Lorentzian metrics.

For metrics of the form \eqref{NoS21V22.1} the mass of a connected component of the conformal boundary at infinity, which we denote for simplicity in \eqref{18IX20.4} by $\partial M$, is defined
 by the formula
 \cite{HerzlichRicciMass} (compare \cite[equation~(IV.40)]{BCHKK})
 \begin{gather}
 m( \partial M)
 =
 - \lim_{r\rightarrow \infty}\int_{\{r\} \times \partial M} D^j V
 \left( R{}^i{}_j - \frac {R{}}{n}\delta^i_j\right) {\rm d}\sigma_i,
 \label{18IX20.4}
 \end{gather}
 with the function $V$ given by
 \begin{equation}\label{1XI22.11}
 V =\sqrt{r^2 + \kappa_a}
 \end{equation}
 in the coordinate system of \eqref{NoS21V22.1}--\eqref{NoS21V22.3}.
 Here
 ${\rm d}\sigma_i := \sqrt{\det g} \,{\rm d}S_i$,
 $R_{ij}$ is the Ricci tensor of the metric $g$, $R$ its trace, and
 we have ignored an overall dimension-dependent positive multiplicative factor which is typically included in the physics literature.

 The mass $m$ of the metrics \eqref{21XI22.p1} is proportional to $m_c$, and that of the metrics \eqref{21XI22.p1} is proportional to $-m_c$.

We make appeal to the construction described in \cite[Section~2]{CDW}, where the hyperbolic metric has been glued-in within an $\epsilon$-neighborhood of boundary points $p_a\in \partial M_a$, {without changing the original metric away from the gluing region}.
We use the coordinates of \eqref{NoS21V22.1}--\eqref{NoS21V22.3}
with isothermal
polar coordinates for the boundary metric
\begin{equation}\label{26V22.1}
 h_{\kappa_a} ={\rm e}^{\zomega_a}
 \big(\underbrace{{\rm d}\rho^2 + \rho^2 {\rm d}\varphi^2}_{=:h_0} \big)
\end{equation}
on $\partial M_a$, and with $p_a$ located at the origin of these coordinates, with the conformal factor chosen so that $h_{\kappa_a}$ has constant Gauss curvature $\kappa_a\in \{0,\pm 1\}$.
Such coordinates can always be defined, covering a disc $D(\rhozero )$ centered at $p$ for some $\rhozero >0$, with the same coordinate radius $\rhozero $ for both boundaries.

After the exotic gluing has been performed, the metric $g_1$ is the original metric outside the half-ball $\mcU_{1,2\epsilon}$ of coordinate radius $2\epsilon<\rhozero$,
and is exactly hyperbolic inside the half-ball $\mcU_{1, \epsilon}$ of coordinate radius $\epsilon$
(see Figure~\ref{F26VIII21.2}); similarly for $g_2$.
 \begin{figure}[t]	\centering
\includegraphics[scale=0.8]{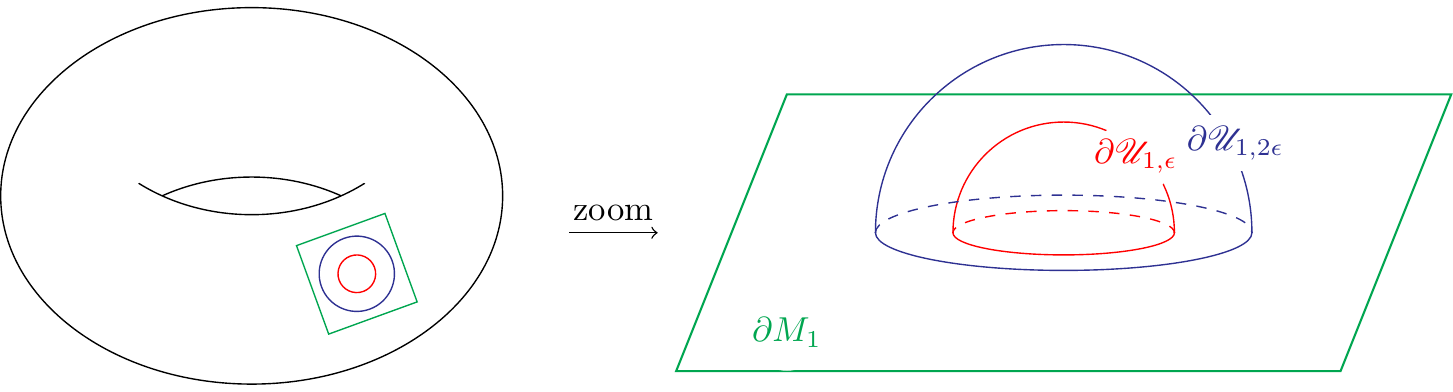}
 \caption{The sets $\mcU_{1, \epsilon}\subset \mcU_{1,2\epsilon}$ and their boundaries when the boundary at infinity $\partial M_1$ is a torus.
 {The parameter $\epsilon$ needs to be small to ensure convergence, in the construction of~\cite{ChDelayExotic}, of masses to the initial ones. The gluing of the conformal metrics at the boundary takes place within a disc of radius $1/i \le \epsilon/8$, with $i\to \infty$ as needed to ensure control of the error terms arising from a change of the conformal factor.}
 Figure from~\cite{CDW}.}	\label{F26VIII21.2}
\end{figure}
In order to control the mass we will need to consider a family of boundary gluings indexed by a parameter $\N\ni i\to\infty$.
For definiteness for $i\ge \red{8}/\epsilon$ we choose the hyperbolic hyperplanes $\blue{\horo_{a,i}}\subset M_a$ of \cite[Section~2]{CDW}
to be half-spheres of radius $1/i$ centered at the origin of the coordinates \eqref{26V22.1}. We choose any pair $(\Lambda_1,\Lambda_2)$ of isometries of the hyperbolic plane as in \cite[Section~2]{CDW} to obtain the boundary-glued manifold $M:=M_{\Lambda_1,\Lambda_2}$.

The above description generalises in an obvious way to gluings around any finite number of points at the conformal boundaries at infinity and any finite number of summands; the differences are purely notational.

Before stating our main theorem it is useful to recall the following:
Consider a two-dimen\-sio\-nal compact oriented manifold $\big(\twodimN ,h \big)$ and a finite number of distinct points {$p_k \in \twodimN $, $k=1,\dots, n$; } when $\twodimN $ is a sphere one needs $n\ge 3$.
There exists on $\twodimN \setminus \{p_k\}_{k=1}^n$ a smooth function $\ominf $, with \emph{puncture singularities, or cusps,} at the points $p_k$, such that the metric ${\rm e}^{\ominf }h $ is complete, has constant Gauss curvature equal to $-1$, and such that $\big(\twodimN ,{\rm e}^{\ominf }{h} \big)$ has finite total area; compare~\cite[Proposition~2.3]{GuillarmouSurfaces}, \cite{ZhangSphere}, and references therein.
An artist's impression of a~punctured torus can be seen in Figure~\ref{F21XI21.1}.

 \begin{figure}[t] \centering
 \includegraphics[scale=0.8]{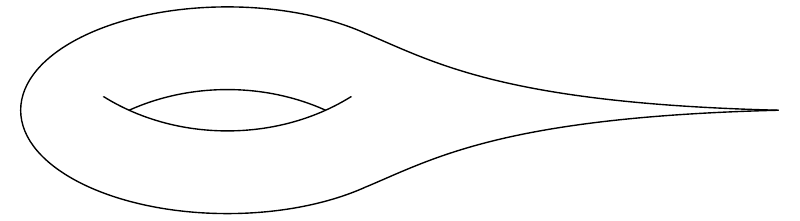}
 \caption{A punctured torus with a hyperbolic metric, from~\cite{CDW}. The figure fails to represent properly that the cusp region is infinitely long.}	\label{F21XI21.1}
\end{figure}

We claim:

\begin{Theorem}\label{NoST29VII21.1}
Let $N\ge 2$ and consider $N$ three-dimensional ALH manifolds $(M_a,g_a)$, $a=1,\dots, N$, with constant scalar curvature, and with a metric of the form \eqref{NoS21V22.1}--\eqref{NoS21V22.3}.
Let $p_{a,k}\in \partial M_a$, with $k=1,\dots,n_a \in \N$, where we assume that $n_a\ge 3$ when $\kappa_a =1$, and that each point $p_{a,k}\in \partial M_a$ has a unique partner $p_{b,j} \in \partial M_b$ {distinct from $ p_{a,k}$}.
 Let ${\rm e}^{\ominf_a } h_{\kappa_a}$ be the unique metric with scalar curvature equal to $-2$ on $\partial M_a \setminus \{p_{a,1},\dots,p_{a,n_a}\}$ with a cusp at {each} $p_{a,k}$. The mass of the Maskit-glued metric as described above converges, as $\epsilon$ tends to zero and $i$ tends to infinity, to the finite limit
\begin{equation}\label{18XI22.1}
 -  \sum_{a=1}^{N}
 \lim_{r\rightarrow \infty}\int_{\{r\} \times \partial M_a}
 D^j \big({\rm e}^{-\ominf_a/2}r \big)
 \left( R{}^\ell{}_j - \frac {R{}}{3}\delta^\ell_j\right)
{\rm d}\sigma_\ell.
\end{equation}
In the $a$th summand $R_{\ell j}$ denotes the Ricci tensor of the metric $g_a$, and $R$ its scalar curvature.
\end{Theorem}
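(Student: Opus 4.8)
The plan is to follow the architecture of \cite[Section~2]{CDW}, replacing the simplifications there afforded by mirror symmetry with direct estimates. The first observation is that the Maskit construction leaves each $g_a$ unchanged outside the half-balls $\mcU_{a,2\epsilon}$, so that the boundary at infinity of the glued manifold $M$ is, as a conformal manifold, the disjoint union of the punctured surfaces $\partial M_a\setminus\{p_{a,1},\dots,p_{a,n_a}\}$. Since the Herzlich-type mass \eqref{18IX20.4} is a limit of integrals over the boundary at infinity, I would first show that the mass of $M$ splits into a sum of one contribution per summand $a$, modulo terms supported in the gluing necks, the latter to be controlled in the limit $i\to\infty$.

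The conceptual core is the identification of the conformal representative seen at infinity and of its associated static potential. The gluing matches the complete hyperbolic cusp metrics ${\rm e}^{\ominf_a}h_{\kappa_a}$, of Gauss curvature $-1$, so that these -- rather than the representatives $h_{\kappa_a}$ of \eqref{NoS21V22.1} -- are the boundary metrics of $M$. Passing from $h_{\kappa_a}$ to ${\rm e}^{\ominf_a}h_{\kappa_a}$ amounts to replacing the defining function $1/r$ by ${\rm e}^{\ominf_a/2}/r$; accordingly the potential $V=\sqrt{r^2+\kappa_a}$ of \eqref{1XI22.11}, which equals $r$ to leading order, gets replaced by the reciprocal of the new defining function, namely ${\rm e}^{-\ominf_a/2}r$. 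Inserting this potential into \eqref{18IX20.4} for each summand and summing reproduces exactly the right-hand side of \eqref{18XI22.1}. I would confirm that ${\rm e}^{-\ominf_a/2}r$ is the potential adapted to the cusp representative, and that the discrepancy between $\sqrt{r^2+\kappa_a}$ and $r$, as well as the subleading terms in the change of defining function, do not survive the limit $r\to\infty$; here the decay \eqref{NoS21V22.3} with $\sigma>5/2$ guarantees both convergence of each boundary integral and integrability up to the cusps, whose finite total area is essential.

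It remains to control the two limiting procedures. For the neck I would show that, as $i\to\infty$, the modification of the conformal factor is confined to discs of coordinate radius $1/i\le\epsilon/8$ about the punctures, and that the contribution of these shrinking regions to the boundary integral tends to zero, using integrability of the integrand near each cusp together with a bound by a positive power of $1/i$ on the conformal-factor change. For the exotic glue-in I would invoke the estimates of \cite{ChDelayExotic}, as in \cite{CDW}, to conclude that replacing $g_a$ by an exactly hyperbolic metric inside $\mcU_{a,\epsilon}$ alters the mass by a quantity tending to $0$ as $\epsilon\to0$. Combining the two yields convergence of the mass of $M$ to \eqref{18XI22.1}.

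The main obstacle is precisely this uniform error control in the absence of the symmetry used in \cite{CDW}: there the two identical summands produced cancellations that trivialized the neck terms, whereas here the summands are arbitrary, non-identical, and of arbitrary topology, so the vanishing of the neck contributions and the $\epsilon$-control of the glue-in must be established directly. Concretely, the delicate point is to track the interplay between the blow-up of $\ominf_a$ at the cusps and the decay rate $\sigma>5/2$ of $e_a$, and to verify that the limits $\epsilon\to0$ and $i\to\infty$ may be taken in a compatible manner without leaving a residual contribution concentrated at the punctures.
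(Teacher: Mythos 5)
There is a genuine gap, and it sits at the conceptual center of the theorem. You assert that ``the gluing matches the complete hyperbolic cusp metrics ${\rm e}^{\ominf_a}h_{\kappa_a}$, so that these \ldots are the boundary metrics of $M$.'' This is not true at any finite stage of the construction, and treating it as automatic skips the part of the proof that requires essentially all of the work. For each finite $i$ the conformal boundary of the glued manifold is a \emph{compact} higher-genus surface (the connected sum of the $\partial M_a$ across cylindrical necks), and since the mass \eqref{18IX20.4} is only defined relative to a constant-scalar-curvature representative of the conformal class at infinity, one must first uniformize: solve the two-dimensional Yamabe equation \eqref{NoS31VII21.31} for a conformal factor $u_i$ on the compact glued boundary. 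Near the neck this CSC representative looks nothing like either $h_{\kappa_a}$ or the cusp metric. The theorem is a statement about the limit of the resulting masses as the neck pinches, and its proof is the proof that the uniformizing factors $u_i$ converge, on compact subsets of each punctured summand, to the cusp conformal factors $\ominf_a$ --- a nontrivial degeneration result, not a feature built into the gluing.

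Concretely, what is missing from your proposal is the entire analytic chain the paper uses to establish this convergence: uniform upper bounds on $u_i$ by comparison with the explicit barrier metric \eqref{NoS9XI21.3ab} (a hyperbolic cylinder blowing up at both ends of the annulus $\Gamma_i$); the consequence that area does not concentrate in the necks, so that the Gauss--Bonnet-normalized ``half-areas'' $A_{a,i}$ have a nonzero limit on at least one summand; a uniform lower bound away from the puncture obtained by rewriting the equation for the auxiliary function $\hat v_i=\psi-\mv_i-c_1+1$ (with $\psi$ a Green-type function satisfying $\Delta_{h_{\kappa_1}}\psi=-2+c\,\delta_{p_1}$) in a form to which the Harnack inequality applies; elliptic estimates and diagonalisation to extract a limit $\mv_\infty$; Deligne--Mumford compactness to identify ${\rm e}^{\mv_\infty}h_{\kappa_1}$ as the punctured hyperbolic metric; and finally the Gauss--Bonnet bookkeeping $A_2=2\pi(1-\chi(\partial M_2))>0$, which is where the hypothesis that no summand with a single puncture is a sphere enters, and which is needed to run the same argument on the second summand. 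Your closing paragraph correctly senses that the absence of mirror symmetry is the difficulty, but locates it in error estimates for the mass integral near the cusps; in the paper the asymmetry is instead handled through the area argument (one cannot assume both half-areas converge to the same limit, only that at least one limit is nonzero, and the topology of the limit must then be used to recover the other). Without some version of this uniformization-convergence argument your proof does not get off the ground, because the object whose mass you are computing for finite $i$ has not been identified.
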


\begin{Remark} \label{R12VI22.1}
 Suppose that we have
 \begin{equation}\label{29VII21.3}
 e_{ij} = r^{-3} \mu_{ij} + o \big(r^{-3}\big)
 \end{equation}
 for each summand,
 where the $\mu_{ij}$'s depend only upon the coordinates $x^A$ on $\partial M$. {In a~$b$-or\-thonor\-mal frame $(\onframe_1,\onframe_2,\onframe_3)$ with $\onframe_3$ proportional to $\partial_r$, formula} \eqref{18XI22.1} simplifies to
 \begin{gather*}
 \sum_{a=1}^{N} \int_{ \partial M_a}{\rm e}^{-\ominf_a/2}
 \left(
 2 \mu_{33}
 + 3 \sum_{i=1}^{2}\mu_{ii}
 \right) {\rm d}^{2}\mu_{\hk}.
 \end{gather*}
\end{Remark}

{Before passing to the proof of Theorem~\ref{NoST29VII21.1}, we note that the theorem
 implies existence of constant-scalar-curvature asymptotically-hyperbolic metrics with arbitrary total mass, and with prescribed topology both of black-hole boundaries and of conformal infinity:}

\begin{Corollary}\label{NoSC21V22.1a}
There exist three-dimensional conformally compactifiable ALH manifolds with constant scalar curvature, mass of any {prescribed value in $\R$}, and
\begin{enumerate}\itemsep=0pt
 \item[$1)$] a boundary at finite distance of genus {${{\mathbf g}_{\mathrm{BH}}}\ge 1$}
 with zero mean curvature and a conformal boundary at infinity of any genus ${{\mathbf g}_\infty}$ larger than ${{\mathbf g}_{\mathrm{BH}}}$;
 \item[$2)$] a spherical boundary at finite distance with zero mean curvature and a conformal boundary at infinity of any genus larger than or equal to two.
\end{enumerate}
\end{Corollary}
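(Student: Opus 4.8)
The plan is to feed explicit constant-scalar-curvature summands into the mass formula \eqref{18XI22.1} of Theorem~\ref{NoST29VII21.1}, choosing their topology to realise the prescribed pair $({{\mathbf g}_{\mathrm{BH}}},{{\mathbf g}_\infty})$ and then varying their coordinate masses to sweep the total mass across all of $\R$. Two families of building blocks are relevant. The Birmingham--Kottler metrics \eqref{21XI22.p1} with $\kappa\in\{0,-1\}$, and the Schwarzschild--anti-de~Sitter metric ($\kappa=1$), possess for suitable $m_c>0$ a minimal (zero mean curvature) horizon at finite distance whose topology is that of their conformal boundary --- a torus, a surface of genus $\ge 2$, or a sphere respectively --- and they are of the required form \eqref{NoS21V22.1}--\eqref{NoS21V22.3} with $\sigma=3$. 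Computing the $r^{-3}$ coefficient of \eqref{29VII21.3} in the frame of Remark~\ref{R12VI22.1}, where $b$ is perturbed only in the radial direction, gives $\mu_{33}=2m_c$ and $\mu_{11}=\mu_{22}=0$, so $2\mu_{33}+3\sum_i\mu_{ii}=4m_c$ and the contribution of such a summand to \eqref{18XI22.1} equals $4m_c\int_{\partial M_a}{\rm e}^{-\ominf_a/2}\,{\rm d}^2\mu_{\hk}>0$, growing without bound as $m_c\to\infty$ while the horizon merely moves to larger radius and the topology stays fixed. The Horowitz--Myers metrics \eqref{21XI22.p2}, by contrast, are smooth toroidal solitons with no interior boundary; here $\mu_{33}=2m_c$, $\mu_{11}=-2m_c$, $\mu_{22}=0$, so $2\mu_{33}+3\sum_i\mu_{ii}=-2m_c$ and their contribution is negative, tending to $-\infty$ as $m_c\to\infty$. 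Thus every configuration below carries one block that drives the mass to $+\infty$ and one that drives it to $-\infty$.

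Next I fix the topology by a Maskit gluing and track it through the Euler characteristic. If the punctured conformal boundaries have genera $g_a$ and are joined by $P$ necks (each identifying a pair of distinct punctures, self-gluings allowed), then excising a disc at each puncture and inserting a tube gives $\chi(\partial M)=\sum_a(2-2g_a)-2P$, whence ${{\mathbf g}_\infty}=\sum_a g_a+P-N+1$ whenever the graph with summands as vertices and necks as edges is connected. For item~$1)$ I take $M_1$ to be the Birmingham--Kottler block with a horizon of genus ${{\mathbf g}_{\mathrm{BH}}}\ge 1$ (so $\kappa_1=0$ if ${{\mathbf g}_{\mathrm{BH}}}=1$ and $\kappa_1=-1$ otherwise) and $M_2$ the Horowitz--Myers torus, joined by $P:={{\mathbf g}_\infty}-{{\mathbf g}_{\mathrm{BH}}}\ge 1$ cross-necks. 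Then ${{\mathbf g}_\infty}=({{\mathbf g}_{\mathrm{BH}}}+1)+P-2+1={{\mathbf g}_{\mathrm{BH}}}+P$ realises any prescribed ${{\mathbf g}_\infty}>{{\mathbf g}_{\mathrm{BH}}}$, the interior boundary being the genus-${{\mathbf g}_{\mathrm{BH}}}$ horizon. Each punctured summand has negative Euler characteristic, so the cusp metrics ${\rm e}^{\ominf_a}h_{\kappa_a}$ demanded by Theorem~\ref{NoST29VII21.1} exist, and since no summand is spherical the hypothesis $n_a\ge 3$ for $\kappa_a=1$ is vacuous.

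For item~$2)$ I take $M_1$ to be Schwarzschild--anti-de~Sitter with a spherical horizon; its spherical conformal infinity must carry $n_1\ge 3$ punctures for a cusp metric to exist. With a single Horowitz--Myers torus $M_2$ one has $\sum_a g_a=1$ and $N=2$, so ${{\mathbf g}_\infty}=P$. The extreme case ${{\mathbf g}_\infty}=2$ is reached with $n_1=3$ punctures on the sphere arranged as one cross-neck to $M_2$ and one self-neck on $M_1$ (so $P=2$), $M_2$ being a once-punctured torus; larger values ${{\mathbf g}_\infty}=P$ follow by adding self-necks on $M_2$. In each case the spherical horizon of $M_1$ supplies the genus-$0$ interior boundary and the thrice-punctured-sphere constraint forces ${{\mathbf g}_\infty}\ge 2$, exactly as asserted.

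Finally, with the topology frozen I let the coordinate masses vary. By Theorem~\ref{NoST29VII21.1} the limiting mass \eqref{18XI22.1} is the sum of the per-summand contributions above, hence a continuous function of $\big(m_c^{(1)},m_c^{(2)}\big)$ on a connected parameter domain; it is unbounded above (drive the Birmingham--Kottler or Schwarzschild parameter to $+\infty$) and unbounded below (drive the Horowitz--Myers parameter to $+\infty$), so its image is all of $\R$. Since for fixed summand parameters the mass of the genuinely glued manifold converges to this limit as $\epsilon\to 0$ and $i\to\infty$, and varies continuously with $\big(m_c^{(1)},m_c^{(2)}\big)$, a routine openness argument shows that for all small enough $\epsilon$ and large enough $i$ the attained masses still cover any prescribed real number, which is the assertion of the corollary. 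The main obstacle is the sign interplay made transparent in item~$2)$: a one-parameter family produces only a half-line of masses, so one must keep both a positive-mass-unbounded and a negative-mass-unbounded block; and since a genus-$0$ block cannot carry negative mass without becoming singular, the negative-mass block is forced to be the Horowitz--Myers torus, whose unavoidable extra genus is precisely what makes ${{\mathbf g}_\infty}\ge 2$ (rather than $\ge 1$) unavoidable in the spherical-horizon case --- so the crux is to verify that this extra genus can always be absorbed into the Euler-characteristic bookkeeping while still matching the target pair $({{\mathbf g}_{\mathrm{BH}}},{{\mathbf g}_\infty})$.
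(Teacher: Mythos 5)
Your proposal is correct and, for item~1) and the final ``sweep all of $\R$'' step, essentially coincides with the paper's proof: both feed Birmingham--Kottler and Horowitz--Myers blocks into \eqref{18XI22.1}, observe that the BK contribution is a positive multiple of $\mcone$ and the HM contribution a negative multiple of $\mctwo$ (your coefficients $4\mcone$ and $-2\mctwo$ differ from the paper's $2\mcone$ and $-\mctwo$ by the overall factor that \eqref{18IX20.4} deliberately leaves unnormalised, so nothing hinges on this), and let the freely prescribable parameters carry the limiting mass over all of $\R$. The genuine divergence is in the topology bookkeeping. For item~1) the paper attaches ${{\mathbf g}_\infty}-{{\mathbf g}_{\mathrm{BH}}}$ \emph{separate} HM summands, one neck each, whereas you run ${{\mathbf g}_\infty}-{{\mathbf g}_{\mathrm{BH}}}$ necks between the single BK block and a single HM torus; both are admissible under Theorem~\ref{NoST29VII21.1} and give the same genus count, though the paper's tree configuration gives more independent mass parameters (irrelevant here). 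For item~2) the paper sidesteps the $n_a\ge 3$ requirement on the spherical summand by first manufacturing an auxiliary manifold $(M_1,g_1)$ --- a spherical BK joined to one HM torus by an Isenberg--Lee--Stavrov gluing, whose mass sign is unknown --- and then Maskit-gluing \emph{that} (now genus~$1$ at infinity, hence a legitimate non-spherical summand) to a further HM block of very negative mass. You instead stay entirely inside Theorem~\ref{NoST29VII21.1} by putting three punctures on the sphere, one cross-neck to an HM torus and one self-neck; this is cleaner in that it avoids importing the ILS construction, but it leans on the theorem covering self-gluings (handle attachment), which the statement's phrasing and the ``purely notational modifications'' remark support even though the paper's own proof never exercises that case. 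Finally, your closing intermediate-value/approximation remark addressing the gap between the \emph{limiting} mass \eqref{18XI22.1} and the mass of an actual glued manifold is a point the paper passes over silently; it is welcome, though ``openness'' should really be a continuity-plus-intermediate-value argument in the parameters $\big(\mcone,\mctwo\big)$ at fixed small $\epsilon$ and large $i$.
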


\begin{proof}
We start by noting that the contribution to the mass of a Bir\-ming\-ham--Kottler component, say $M_1$, with mass parameter which we denote by $\mcone $, can be written in the following simpler form in the limit when the gluing parameter $\varepsilon$ goes to zero and $i$ goes to infinity:
\[
 2 \mcone \int_{\partial M_1} {\rm e}^{-\ominf_1 /2}\, {\rm d}\mu_{h_{\kappa_1}}.
\]
When a component, say $M_2$, which is being glued is (the space-part of) a Horowitz--Myers metric with mass parameter {denoted as} $\mctwo$, its contribution to the mass in \eqref{18XI22.1}, again in the limit when the gluing parameter $\varepsilon$ goes to zero and $i$ goes to infinity, can be simplified to
\[
 - \mctwo \int_{\T^2} {\rm e}^{-\ominf_2 /2}\, {\rm d}\mu_{h_0}.
\]

1. Apply Theorem~\ref{NoST29VII21.1} to a Maskit gluing of a Birmingham--Kottler solution, with minimal boundary of genus {${{\mathbf g}_{\mathrm{BH}}}$} and mass parameter {$\mcone >m_{\text{crit}}$}, to ${{\mathbf g}_\infty}-{{\mathbf g}_{\mathrm{BH}}}$ Horowitz--Myers metrics with mass parameters $\mca >0$, where $a=2,\dots,{{\mathbf g}_\infty}-{{\mathbf g}_{\mathrm{BH}}}$.
{Here $m_{\text{crit}}=m_{\text{crit}}({{\mathbf g}_{\mathrm{BH}}})\le 0$ is the lower bound for the mass of a Birmingham--Kottler solution as needed for regularity.} The resulting limiting mass is
\[
 m= 2 \mcone \int_{\partial M_1} {\rm e}^{-\ominf_1 /2} \,{\rm d}\mu_{h_{\kappa_1}}
 - \sum_{a=2}^{{{\mathbf g}_\infty}-{{\mathbf g}_{\mathrm{BH}}}} \mca \int_{\T^2} {\rm e}^{-\ominf_a /2}\, {\rm d}\mu_{h_0},
\]
where the parameters $\mcone$ and $\mca$ are freely prescribable, {so that $m$ can take any values in $\R$}.

2. Let $(M_1,g_1)$ be obtained by a Maskit gluing, or an Isenberg--Lee--Stavrov~\cite{ILS} gluing, of a~spherical Birmingham--Kottler metric with a Horowitz--Myers metric. (In the Isenberg--Lee--Stavrov case the asymptotics \eqref{29VII21.3} is satisfied by~$g_1$.)
If it could be arranged that the resulting mass is negative, one would obtain a solution with toroidal conformal infinity and a spherical black hole; but the sign of the mass in this case is not clear. However, we can apply Theorem~\ref{NoST29VII21.1} to a Maskit gluing of $(M_1,g_1)$ to another Horowitz--Myers metric with sufficiently negative mass, which will provide the desired metric.
\end{proof}

\begin{Remark} \label{R9VI22}
One can use directly the construction of the proof of Theorem~\ref{NoST29VII21.1} to obtain a CSC ALH metric with a spherical boundary at finite distance with zero mean curvature (``apparent horizon''), negative mass, and a conformal boundary at infinity of any genus ${{\mathbf g}_\infty}$ larger than or equal to three,
by gluing a spherical Birmingham--Kottler metric across ${{\mathbf g}_\infty}$ punctures with ${{\mathbf g}_\infty}$ Horowitz--Myers metrics.
\end{Remark}

\begin{proof}[Proof of Theorem~\ref{NoST29VII21.1}]
We prove the result for the exotic Maskit gluing at one point
of each summand, {$p_1\in \partial M_1$ and $p_2 \in \partial M_2$},
in which case our assumptions require that neither summand is a sphere.
The proof in the {more} general case requires only {tedious} notational modifications.

Our aim is to prove
the existence of the limiting conformal factors $\omega_a$ on each summand $M_1$ and $M_2$. This was the contents of Lemma~5.9 in \cite{CDW}; the remaining arguments in \cite{CDW}, which do not need to be repeated here, establish~\eqref{18XI22.1}.

Some comments on the proof might be in order. The existence of the $\omega_a$'s is established by showing first a uniform upper bound on the sequence of conformal factors, by comparison with suitable barriers. One then needs a uniform lower bound: this is obtained by rewriting the equation in a form
to which a Harnack inequality applies. One further exploits the fact that the area does not
concentrate near the gluing necks; this follows from a good choice of the upper barriers. Convergence of a subsequence on compact subsets of the punctured manifolds follows then by elliptic estimates. The fact that the limit is the conformal factor for a punctured hyperbolic metric could most likely be established directly with some extra work, using the estimates derived here and in \cite{CDW} together with the results and techniques of Ruflin~\cite{Rupflin}. We avoid this supplementary work by appealing to the Deligne--Mumford compactness.

We now pass to the details of the above.

For notational simplicity ``boundary'' in the rest of the proof denotes the conformal boundary at infinity. Note that all our constructions are localised near that last boundary, so that the part of the boundary which corresponds to black hole horizons plays no role whatsoever in what follows.

By construction the boundary $\partial M$ of the new manifold is the gluing of
\[
 \zMone := \partial M_1\setminus \mcU_{1, \red{1/i}} \equiv \partial M_1\setminus \red{D(}1/i)
 \qquad
 \mbox{with}
 \quad
 \zMtwo := \partial M_2\setminus \mcU_{2, \red{1/i}}
 \equiv \partial M_2\setminus \red{D(}1/i)
\]
across their boundaries.
We will often view both
 $\zMone $ and $\zMtwo $ as subsets of $\partial M$.

To make clear the differentiable structure on $\partial M$ it is convenient we introduce a new coordinate on $\red{D(}\rhozero )$,
\begin{equation}\label{25XI21.1}
 \hrho = \frac{ \red{\log} \big( \rhozero ^{-1} \rho\big)}{ \red{\log} ( \rhozero i )}+1,
\end{equation}
so that $\rho\in [1/i,\rhozero ]$ corresponds to $ \hrho\in [0,1]$. The flat metric
\[
 h_0 = {\rm d}\rho^2 + \rho^2 {\rm d}\varphi^2,
 \qquad
 \rho\in (0,\rhozero ]
\]
becomes
\begin{equation*}
 h_0 =\rho^2
 \bigg(
 \frac{{\rm d}\rho^2}{\rho^2}+ {\rm d}\varphi^2
 \bigg)
 = \rho^2
 \big( \red{\log}^2( \rhozero i) {\rm d}\hrho^2 + {\rm d}\varphi^2 \big)
,
 \qquad
 \hrho\in (-\infty,1]
.
\end{equation*}
The differentiable structure near the connecting neck on
$\partial M \approx \partial M_1 \# \partial M_2$ is defined by letting $(\hrho,\varphi)$ range over $(-1,1)\times S^1$, with $(\hrho,\varphi) $
defined as above on
$D(\rhozero) \setminus \red{\overline{D\big(\rhozero^{-1}i^{-2}\big)}}\subset \partial M_1$
and with
\begin{equation}\label{2VI22.41}
 (\hrho,\varphi) \ \text{identified with the coordinates} \ (- \hrho,-\varphi)
\end{equation}
defined as above on $D(\rhozero) \setminus \red{\overline{D\big(\rhozero^{-1}i^{-2}\big)}}\subset \partial M_2$.
 The set covered by these coordinates will be referred to as \emph{the neck region}.
Thus
\[
 \hat h = h_0 \ \text{on} \ D(\rho_0).
\]

We define on $\partial M_1\setminus \{p_1\}$ a smooth metric $\hat h $ in the conformal class of $h_{\kappa_1}$ which equals to $\rho^{-2}{{\rm e}^{-\zomega_a}}h_{\kappa_1}$ on $D(\rhozero)\setminus \{0\}$.

\begin{Remark}
For coherence with \cite{ChDelayHPETv1, CDW} we indicated that we use the method of \cite[Section~2]{CDW} to extend the metric from the conformal boundary to the interior of the manifold. A more direct way in the current context, which differs from that of \cite[Section~2]{CDW} by an isometry of the metric in the hyperbolic region, proceeds as follows:
On $M_a$, in the coordinates centred at $p_a$ in the region where the metric is exactly the hyperbolic metric
\[
b=\frac{{\rm d}\vec y\,^2+{\rm d}x^2}{x^2},
\]
points in $M_1$ of coordinate $(\vec y,x)$ with $\frac{1}{i^2\epsilon}<\sqrt{|\vec y|^2+x^2}<\epsilon$ can be identified with points in $M_2$ of coordinates (in the same range)
\[
 \frac1{i^2\big(|\vec y|^2+x^2\big)}(S\vec y,x),
\]
where, in order to preserve orientation, $S$ is a mirror symmetry with respect to the horizontal axis in the $\vec y$-plane.
This guarantees that the hyperbolic metrics match across the totally geodesic hyperplane $\sqrt{|\vec y|^2+x^2}=1/i$; recall that $1/i< \epsilon < \rho_0/2$.
\end{Remark}

So far the $i$-dependent coordinates of \eqref{25XI21.1} introduce an explicit, but of course only apparent, $i$-dependence in $\hat h$:
\[
\hat h = \red{\log}^2( \rhozero i) {\rm d}\hrho^2 + {\rm d}\varphi^2 \quad \text{on} \ (-\infty,1]\times S^1.
\]
The metric $\hat h $ on $\partial M_2\setminus \{p_2\}$ is defined in an analogous way.

We denote by $\hhi $ the metric on $\partial M$ obtained from $\hat h$, as defined on
$\partial M_1 \setminus D\big(\rhozero^{-1}i^{-2}\big) $ and $\partial M_2\setminus D\big(\rhozero^{-1}i^{-2}\big) $ above, by using the identification \eqref{2VI22.41} in the neck region. It should be clear that $\hhi$ depends upon $i$ because the $\hhi$-diameter of the neck region equals $2 \log(\rho i)$, and hence grows with $i$.

The metric $\hhi$ is
conformal to $h_{\kappa_a}$ on $\partial M_a\setminus \red{D\big(\rhozero^{-1}i^{-2}\big)}$. It coincides with the cylindrical metric $ \red{\log}^2(\rhozero i) {\rm d}\hrho^2 + {\rm d}\varphi^2 $ in the neck region of both summands of the connected sum, hence is smooth on $\partial M$.

Consider the conformal class of metrics on $\partial M$ induced by $g$.
This conformal class depends upon $i$ but is independent of the exotic-gluing parameter $\epsilon$, except for the requirement that $\epsilon \ge \red{8}/i$.
(This is due to the fact that the parameter $\epsilon$ only plays a role in the initial insertion of an exactly hyperbolic region into $(M_a,g_a)$. The resulting metrics on $M_a$ depend upon $\epsilon$ in the interior, but the conformal class of the metric on $\partial M_a$ remains unchanged.
The condition $\epsilon \ge \red{8}/i$ is innocuous, as we are only concerned with the limit $i\to\infty$.)
In this class there exists a unique metric with constant scalar curvature equal to minus two.
It can be found by solving the two-dimensional Yamabe equation
\begin{equation}\label{NoS31VII21.31}
 \Delta_{\hhi }u_i = - R {\rm e}^{ u_i} + \hRi,
\end{equation}
with $R=-2$, and where $\hRi $ is the scalar curvature of the metric $\hhi $, so that the metric ${\rm e}^{ u_i}\hhi $ has scalar curvature $R$.
It is important in what follows that the function $u_i$ is independent of the parameter $\epsilon$ introduced when gluing-in the hyperbolic metric near the points $p_i$.

The need to use a constant-scalar-curvature representative of the conformal class at infinity arises from the definition of mass. Indeed, it is built-in into \eqref{NoS21V22.1}--\eqref{1XI22.11} that the metric $h_{\kappa_a}$ has constant scalar curvature.

The Gauss--Bonnet theorem gives{\samepage
\begin{equation*}
 \int_{\partial M } {\rm e}^{u_i} \,{\rm d}\mu_{\hhi }= \red{2\pi}
 \big(
 2-\chi(\partial M_1) - \chi(\partial M_2)
 \big),
\end{equation*}
where $ {\chi}\big(\twodimN\big)$ denotes the Euler characteristic of a two-dimensional manifold $\twodimN$.}

For $0<a<b$ let
\[
\Gamma(a,b) := D(b) \setminus \overline{D(a)},
\]
where $D(c)$ denotes an open disc of radius $c$ in $\R^2$.
By construction,
there exists a function $\zui $ defined
on
\[
 \red{\Gamma_i}:=
 \Gamma\big(\rhozero^{-1} i^{-2}, \rhozero\big)
\]
so that there we have
\begin{equation*}
 \hhi ={\rm e}^{ \zui } h_0.
\end{equation*}
Then the metric
\[
 {\rm e}^{ u_i+\zui } h_0,
\]
defined on $\red{\Gamma_i}$, has scalar curvature equal to minus two.
Since $h_0$ is flat, the function
\begin{equation*}
 \hmv_i:= u_i+\zui
\end{equation*}
satisfies on $\red{\Gamma_i}$ the equation
\begin{equation}\label{NoS31VII21.32}
 \Delta_{h_0 }\hmv_i = 2 {\rm e}^{\hmv_i}.
\end{equation}

When $\kappa_{1}<0$ there exist on $M_1\setminus \red{D\big(}\rhozero^{-1} i^{-2}\big)$ two metrics of negative scalar curvature conformal to each other, namely the metric ${\rm e}^{u_i} \hhi $ and the original metric $h_{\kappa_1}$. We write
 \begin{equation}\label{30V22.91}
 {\rm e}^{u_i} \hhi = {\rm e}^{\hato_i} h_{\kappa_1}.
 \end{equation}
 The functions $\hato_i$ are solutions of the equation
\begin{equation*}
 \Delta_{h_{\kappa_1}} \hato_i = 2 {\rm e}^{\hato_i} + 2 \kappa_1 = 2 {\rm e}^{\hato_i} - 2
.
\end{equation*}
The maximum principle shows that $\hato_i$ has neither a positive
 interior maximum nor a negative interior minimum on the compact manifold with boundary $ \partial M_1 \setminus D(a)$ for $a\in\big[\blue{\rhozero^{-1} i^{-2}},\rho_0\big]$.

When $\kappa_{1}=0$
we write again \eqref{30V22.91}, except that now we have
\begin{equation*}
 \Delta_{h_{\kappa_1}} \hato_i = 2 {\rm e}^{\hato_i}.
\end{equation*}
The maximum principle ensures then the property, that $\hato_i$ has no interior \mage{maximum} on the compact manifold with boundary $ \partial M_1 \setminus D(a)$ for $a\in\big[\blue{\rhozero^{-1} i^{-2}},\rho_0\big]$.

It also follows from \eqref{26V22.1} that on the annulus $D(\rhozero )
 \setminus \overline{D\big(\blue{\rhozero^{-1} i^{-2}}\big)}$
we can rewrite \eqref{30V22.91} as
 \begin{equation*}
 {\rm e}^{\hmv_i} h_0
 = {\rm e}^{\hato_i} h_{\kappa_1}
 = {\rm e}^{\hato_i + \mathring \omega_1} h_{0}.
 \end{equation*}

We will need the property (cf., e.g.,~\cite{MazzeoTaylor}) that solutions of the equation
\begin{equation*}
 \Delta \mv= 2 {\rm e}^{\mv} +\varpi,
\end{equation*}
where $\varpi$ is a function independent of $\mv$ (in the cases of interest here $\varpi\equiv 0$ or $\varpi\equiv-2$, compare~\eqref{NoS31VII21.31}),
satisfy a comparison principle: given a conditionally compact domain with boundary:
\begin{equation*}
 \hat \mv > \mv \quad \text{on} \ \partial\Omega
 \
 \Longrightarrow
 \
 \hat \mv > \mv \quad \text{on} \ \Omega.
\end{equation*}

The following metric, which has constant negative scalar curvature equal to $-2$, provides a~useful comparison function:
\begin{equation}
 \label{NoS9XI21.3ab}
 {\rm e}^{\mv_{*,i}} h_0 :=
\Bigg(
 \frac{\pi }{
 \log (\rho_0^2i^2) \sin\Big( \pi \frac{\log (\rho/\rho_0) }{\log (\rho_0^2i^2)}\Big) \rho}
\Bigg)^2 \big({\rm d}\rho^2+\rho^2 {\rm d}\varphi^2\big), \qquad
 \rho\in \left(\frac1{\rho_0i^2},\rho_0 \right),
\end{equation}
The conformal factor ${\rm e}^{\mv_{*,i}}$ tends to infinity at $\rho=\rhozero^{-1} i^{-2}$ and at $\rho=\rhozero $.
In the coordinates $(\hat \rho, \varphi)$ the metric \eqref{NoS9XI21.3ab} reads
\begin{equation}
 \label{NoS25XI21.6}
 {\rm e}^{\mv_{*,i}}h_0 =
\left(
 \frac{\pi }{
 2 \cos( \pi \hrho/2 )
}
 \right)^2 \left({\rm d}\hrho^2+ \frac{1}{\red{\log^2(\rho_0i)}} {\rm d}\varphi^2\right)
, \qquad
 \hrho\in (-1,1).
\end{equation}
Note that the circle $\hrho =0$ is a closed geodesic minimising length for the metric \eqref{NoS25XI21.6}, of length $\pi^2/\red{\log (\rho_0i)}$.

Since the function ${\rm e}^{\mv_{*,i}}$, defined in \eqref{NoS9XI21.3ab}, tends to infinity as the boundary of $\Gamma_i$ is approached, the comparison principle gives:

\begin{Lemma}\label{NoSL23X21.1m}
On $D(\rhozero ) \setminus \overline{D\big(\rhozero^{-1} i^{-2}\big)}$ it holds that
\begin{equation} \label{NoS9XI21.2}
{\rm e}^{\hmv_i}\leq {\rm e}^{\mv_{*,i}}.
\end{equation}
\end{Lemma}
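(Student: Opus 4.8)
The plan is to read \eqref{NoS9XI21.2} off the comparison principle recalled just above, taking the explicit metric \eqref{NoS9XI21.3ab} as an upper barrier. First I would observe that the two functions in play solve the \emph{same} equation. Indeed $\hmv_i$ satisfies \eqref{NoS31VII21.32}, $\Delta_{h_0}\hmv_i = 2{\rm e}^{\hmv_i}$; and since the metric ${\rm e}^{\mv_{*,i}}h_0$ of \eqref{NoS9XI21.3ab} has constant scalar curvature $-2$ while $h_0$ is flat, the same computation that produced \eqref{NoS31VII21.32} shows $\Delta_{h_0}\mv_{*,i} = 2{\rm e}^{\mv_{*,i}}$ on $\Gamma_i$. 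Thus both are solutions of $\Delta_{h_0}\mv = 2{\rm e}^{\mv}$, i.e., of the equation with $\varpi\equiv 0$, so that the comparison principle applies to them.

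Next I would record the two boundary behaviours that make the barrier work. On the one hand $\hmv_i = u_i + \zui$ extends smoothly, hence boundedly, to the closed annulus $\overline{\Gamma_i}$: $u_i$ is smooth on the compact manifold $\partial M$, and $\zui$ relates the two smooth nondegenerate metrics $\hhi$ and $h_0$ on a neighbourhood of $\overline{\Gamma_i}$ that avoids the singular origin, so it too is smooth there. Set $M_i := \max_{\overline{\Gamma_i}}\hmv_i < \infty$. On the other hand, inspecting \eqref{NoS9XI21.3ab} --- or the form \eqref{NoS25XI21.6}, where $\cos(\pi\hrho/2)\to 0$ as $\hrho\to\pm 1$ --- the barrier $\mv_{*,i}$ tends to $+\infty$ as either boundary circle of $\Gamma_i$ is approached.

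The one subtlety is that $\mv_{*,i}=+\infty$ on $\partial\Gamma_i$, so the boundary inequality cannot be fed into the comparison principle on $\Gamma_i$ itself; this is the step that requires care. I would circumvent it by retreating to a relatively compact sub-annulus: because $\mv_{*,i}\to+\infty$ at $\partial\Gamma_i$, there are radii $\rho_0^{-1}i^{-2}<a<b<\rho_0$ with $\mv_{*,i}>M_i$ on $\Gamma_i\setminus\Gamma(a,b)$, and in particular $\mv_{*,i}>M_i\ge\hmv_i$ on $\partial\Gamma(a,b)$. Applying the comparison principle on $\Omega=\Gamma(a,b)$ with $\hat\mv=\mv_{*,i}$ and $\mv=\hmv_i$ yields $\mv_{*,i}>\hmv_i$ on $\Gamma(a,b)$, while on the two collars $\Gamma_i\setminus\Gamma(a,b)$ the inequality $\mv_{*,i}>M_i\ge\hmv_i$ holds by construction. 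Combining the two gives $\mv_{*,i}\ge\hmv_i$, equivalently \eqref{NoS9XI21.2}, throughout $\Gamma_i = D(\rho_0)\setminus\overline{D(\rho_0^{-1}i^{-2})}$. Beyond this boundary-blow-up bookkeeping, and the verification that $\hmv_i$ stays bounded up to $\partial\Gamma_i$ so that $M_i$ is finite, the argument is routine.
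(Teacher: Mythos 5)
Your proposal is correct and follows essentially the same route as the paper: the paper's proof consists precisely of observing that ${\rm e}^{\mv_{*,i}}$ blows up at $\partial\Gamma_i$ and invoking the comparison principle for solutions of $\Delta \mv = 2{\rm e}^{\mv}$. Your sub-annulus argument is just the standard way of making rigorous the application of that principle when the upper barrier is infinite on the boundary, so it fills in a detail the paper leaves implicit rather than taking a different approach.
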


\begin{Remark} \label{NoSR22V22.1}
On the circle $\rho=1/i$ the function
${\rm e}^{\mv_{*,i}} $ tends to infinity as $i\to \infty$, but the metric length $\ell_i$ of $S_{1/i}$ equals
\[
\ell_i=\frac1i\int_{\varphi\in[0,2\pi]}\big({\rm e}^{\hmv_{i}/2}\big)_{|\rho=1/i}\, {\rm d} \varphi\leq\frac1i\int_{\varphi\in[0,2\pi]}\big({\rm e}^{\mv_{*,i}/2}\big)_{|\rho=1/i}\,{\rm d}\varphi=\frac{\pi^2}{\log{(\rho_0i)}},
\]
so that $\ell_i$ approaches zero as $i$ tends to infinity.
\end{Remark}

\begin{Corollary} \label{NoSC21XI21} For any $\rho_1\in\big(\rhozero^{-1} i^{-2},\rhozero \big)$,
 there exists a constant $\hat c =\hat c(\rho_1)$ such that
 \[
 \mv_i \le \hat c
 \]
 on $\partial M_1 \setminus D(\rho_1)$, independently of $i$.
\end{Corollary}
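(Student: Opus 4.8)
The plan is to bound $\hato_i$ --- the conformal factor defined by \eqref{30V22.91}, which is the quantity appearing in the statement --- from above by reducing the global estimate on $\partial M_1\setminus D(\rho_1)$ to an estimate on the single circle $\{\rho=\rho_1\}$, and then controlling the value on that circle by the explicit barrier $\mv_{*,i}$ of Lemma~\ref{NoSL23X21.1m}. Since we are treating the one-point gluing, neither summand is a sphere, so $\kappa_1\in\{0,-1\}$, and these are exactly the two cases covered by the maximum-principle statements recorded above.

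First I would apply that one-sided maximum principle to $\Omega:=\partial M_1\setminus D(\rho_1)$, a compact surface whose only relevant boundary is the circle $\{\rho=\rho_1\}$, the black-hole boundaries playing no role. When $\kappa_1=0$ the function $\hato_i$ is subharmonic, so its maximum over $\Omega$ is attained on $\{\rho=\rho_1\}$; when $\kappa_1=-1$ it has no positive interior maximum, so $\max_\Omega\hato_i\le\max\big(0,\max_{\{\rho=\rho_1\}}\hato_i\big)$. In either case it suffices to bound $\hato_i$ on $\{\rho=\rho_1\}$ uniformly in $i$.

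To estimate $\hato_i$ on that circle I would use that, since $\rho_0^{-1}i^{-2}<\rho_1<\rho_0$, the circle lies in the annulus of Lemma~\ref{NoSL23X21.1m}, where $\hato_i=\hmv_i-\mathring\omega_1$ (from the identity $e^{\hmv_i}h_0=e^{\hato_i+\mathring\omega_1}h_0$) and $\hmv_i\le\mv_{*,i}$, so that $\hato_i\le\mv_{*,i}-\mathring\omega_1$ on $\{\rho=\rho_1\}$. The function $\mathring\omega_1$ is fixed and smooth, hence bounded below on the fixed circle, while $\mv_{*,i}|_{\rho=\rho_1}$ is constant in $\varphi$ with value read off from \eqref{NoS9XI21.3ab}. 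The key computation is that, as $i\to\infty$,
\[
 \log\big(\rho_0^2 i^2\big)\,\sin\!\left(\pi\frac{\log(\rho_1/\rho_0)}{\log(\rho_0^2 i^2)}\right)\longrightarrow \pi\log(\rho_1/\rho_0),
\]
so $e^{\mv_{*,i}}|_{\rho=\rho_1}$ converges to the finite positive number $\big(\rho_1\log(\rho_1/\rho_0)\big)^{-2}$; in particular $\mv_{*,i}|_{\rho=\rho_1}$ is bounded uniformly in $i$. Combining the two steps yields $\hato_i\le\hat c(\rho_1)$ on $\Omega$ with $\hat c$ independent of $i$.

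The main obstacle is pinning down this $i$-uniformity of the barrier on $\{\rho=\rho_1\}$: the conformal factor $e^{\mv_{*,i}}$ blows up both as $\rho\to\rho_0^-$ and as $\rho\to(\rho_0^{-1}i^{-2})^+$, so it is only useful strictly between the two ends. This is precisely why the conclusion must be restricted to $\partial M_1\setminus D(\rho_1)$ with $\rho_1$ fixed and bounded away from $\rho_0$, and why one cannot push $\rho_1\to\rho_0$; the bound survives the limit $i\to\infty$ only thanks to the cancellation in the displayed limit, which matches the degenerating factor $\log(\rho_0^2 i^2)$ against the vanishing sine.
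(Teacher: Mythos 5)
Your proposal is correct and follows essentially the same route as the paper: the paper likewise evaluates the barrier ${\rm e}^{\mv_{*,i}}$ at $\rho=\rho_1$, notes that it converges to $1/\big(\rho_1^2\log^2(\rho_1/\rhozero)\big)$ as $i\to\infty$ and is therefore bounded uniformly in $i$, concludes that the $\mv_i$'s are uniformly bounded on the circle $S(\rho_1)$, and then invokes the maximum principle on $\partial M_1\setminus D(\rho_1)$. Your treatment is if anything slightly more explicit about the two cases $\kappa_1\in\{0,-1\}$ and about converting the bound on $\hmv_i$ into one on $\mv_i$ via the fixed smooth function $\mathring\omega_1$, details the paper leaves implicit.
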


\begin{proof}
At $\rho=\rho_1$ we have
\begin{equation*}
{\rm e}^{\hmv_i}\leq {\rm e}^{\mv_{*,i}}=
\Bigg(
 \frac{\pi }{
 \log \big(\rho_0^2i^2\big) \sin\Big( \pi \frac{\log (\rho/\rho_0) }{\log (\rho_0^2i^2)} \Big) \rho
}
 \Bigg)^2
 \longrightarrow_{i\rightarrow +\infty}\frac{1}{\rho_1^2\log^2(\rho_1/\rhozero)}.
\end{equation*}
This shows that
the $\mv_i$'s are bounded by a constant $\hat c(\rho_1 ) >0 $ independently of $i$ on $S(\rho_1)$ for $\rho_1\in\big(\rhozero^{-1} i^{-2},\rhozero \big)$.
The result follows now from the maximum principle.
\end{proof}

The corollary gives an estimation of the conformal factors from above. In order to prove convergence of the sequence $\mv_i$ away from the puncture, we also need to bound the sequence of conformal factors away from zero. As a tool towards this we
consider the sequence of ``half-areas'':
\begin{equation*}
 0 < A_{1,i}:=\int_{\zMone } {\rm e}^{\mv_i} \,{\rm d}\mu_{h_{\kappa_1}} =
 \int_{\zMone } {\rm e}^{u_i} \,{\rm d}\mu_{\hhi }
 <
 \int_{\partial M } {\rm e}^{u_i} \,{\rm d}\mu_{\hhi }
 = \red{2\pi}
 \big(
 2-\chi(\partial M_1) - \chi(\partial M_2)
 \big).
\end{equation*}
Thus the sequence $\{A_{1,i}\}_{i\in \N}$ is bounded, and so passing to a subsequence $\{i_j\}_{j\in \N}$, if necessary, we can assume that the limit exists:
\begin{equation*}
 A_{1}:= \lim_{j\to\infty} A_{1,i_j}.
\end{equation*}
Using analogous definitions on $\partial M_2$, we have
\begin{equation*}
 A_{1,i} + A_{2,i} = \red{2\pi}
 \big(
 2-\chi(\partial M_1) - \chi(\partial M_2)
 \big)
 \
 \Longrightarrow
 \
 A_2 = \red{2\pi}
 \big(
 2-\chi(\partial M_1) - \chi(\partial M_2)
 \big) - A_1.
\end{equation*}
It follows that at least one of $A_1$ and $A_2$ is not zero. Exchanging $M_1$ with $M_2$, we can without loss of generality assume that
\begin{equation}\label{4VI22.1}
 A_1 \ne 0.
\end{equation}

In our next result the parameter $\repsilon>0$ should not be confused with the parameter $\epsilon$ introduced by the exotic gluing of $M_1$ with $M_2$:

\begin{Lemma}\label{NoSL23X21.1}Assuming \eqref{4VI22.1}, there exist constants $C_1$ and $C_2$ such that, for all $\repsilon$ sufficiently small,
\begin{equation*}
 \limsup_{j\in \N}
 \Big(
 \inf_{\red{\partial M_1 \setminus D(\repsilon/2)}} \mv_{i_j}
 \Big)
 \le C_1,
 \qquad
 C_2\le
 \liminf_{j\in\N}
 \Big(
 \sup_{\red{\partial M_1 \setminus D(\repsilon/2)}} \mv_{i_j}
 \Big).
\end{equation*}
In other words, if $\repsilon$ is sufficiently small, then for all $j$ sufficiently large we have
\begin{equation}\label{NoS15VIII21.21b}
 \inf_{\red{\partial M_1 \setminus D(\repsilon/2)}} \mv_{i_j}
 \le C_3,
 \qquad
 C_4\le
 \sup_{\red{\partial M_1 \setminus D(\repsilon/2)}} \mv_{i_j},
\end{equation}
for some constants $C_3$ and $C_4$.
\end{Lemma}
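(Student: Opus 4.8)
The plan is to prove the two inequalities by entirely different means: the first — that the infimum does not run off to $+\infty$ — is immediate from the upper barrier already in hand, while the second — that the supremum does not run off to $-\infty$ — is the real point and rests on the non-concentration of area near the neck together with the hypothesis $A_1\ne 0$.

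For the first inequality I would simply invoke Corollary~\ref{NoSC21XI21}. Fixing once and for all a small reference radius $\repsilon_0$, it provides a constant $\hat c(\repsilon_0/2)$, independent of $i$, with $\mv_i\le \hat c(\repsilon_0/2)$ on $\partial M_1\setminus D(\repsilon_0/2)$. For every $\repsilon\le\repsilon_0$ the set $\partial M_1\setminus D(\repsilon/2)$ contains $\partial M_1\setminus D(\repsilon_0/2)$, so $\inf_{\partial M_1\setminus D(\repsilon/2)}\mv_{i_j}\le \hat c(\repsilon_0/2)=:C_1$, whence $\limsup_j\inf\mv_{i_j}\le C_1$ with $C_1$ independent of $\repsilon$. (One could equally read this off the Gauss--Bonnet identity $\int_{\partial M}{\rm e}^{u_i}\,{\rm d}\mu_{\hhi}=2\pi(2-\chi(\partial M_1)-\chi(\partial M_2))$: were the infimum to diverge, ${\rm e}^{\mv_{i_j}}$ would be large on all of $\partial M_1\setminus D(\repsilon/2)$, contradicting the finite total area.)

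The substance is the lower bound for the supremum, the idea being that the half-area $A_{1,i}$ cannot accumulate entirely in the thin collar abutting the neck. Using $\zMone=\partial M_1\setminus D(1/i)$ I would split
\[
 A_{1,i_j}
 =\int_{\partial M_1\setminus D(\repsilon/2)}{\rm e}^{\mv_{i_j}}\,{\rm d}\mu_{h_{\kappa_1}}
 +\int_{\Gamma(1/i_j,\,\repsilon/2)}{\rm e}^{\mv_{i_j}}\,{\rm d}\mu_{h_{\kappa_1}},
\]
and control the collar term by the barrier of Lemma~\ref{NoSL23X21.1m}. Since ${\rm e}^{\hmv_i}h_0={\rm e}^{\mv_i}h_{\kappa_1}$ as metrics, the collar integral equals $\int_{\Gamma(1/i_j,\repsilon/2)}{\rm e}^{\hmv_{i_j}}\,{\rm d}\mu_{h_0}\le\int_{\Gamma(1/i_j,\repsilon/2)}{\rm e}^{\mv_{*,i_j}}\,{\rm d}\mu_{h_0}$, the $h_0$-area swept out by the comparison metric \eqref{NoS25XI21.6} over the collar. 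In the coordinate \eqref{25XI21.1} the collar is $\hrho\in[0,\hrho_\repsilon]$ with $\hrho_\repsilon=1-\log(2\rhozero/\repsilon)/\log(\rhozero i_j)$, and a direct integration of \eqref{NoS25XI21.6} shows this area equals $\pi^2\tan(\pi\hrho_\repsilon/2)/\log(\rhozero i_j)$, which tends, as $i_j\to\infty$, to $2\pi/\log(2\rhozero/\repsilon)$. Choosing $\repsilon$ small enough that this limit is $<A_1/2$ and using $A_{1,i_j}\to A_1$, I obtain $\int_{\partial M_1\setminus D(\repsilon/2)}{\rm e}^{\mv_{i_j}}\,{\rm d}\mu_{h_{\kappa_1}}\ge A_1/2$ for all large $j$. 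Since the $h_{\kappa_1}$-area of $\partial M_1\setminus D(\repsilon/2)$ is bounded, for every $\repsilon$, by the fixed total area $V_0:=\mathrm{Area}_{h_{\kappa_1}}(\partial M_1)$, this forces $\sup{\rm e}^{\mv_{i_j}}\ge A_1/(2V_0)$, hence $\sup\mv_{i_j}\ge C_2:=\log(A_1/(2V_0))$, again independent of $\repsilon$; thus $\liminf_j\sup\mv_{i_j}\ge C_2$. The restated form \eqref{NoS15VIII21.21b} is then just the unwinding of $\limsup$ and $\liminf$.

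The main obstacle is exactly the collar estimate: one must verify that the barrier area over $\Gamma(1/i,\repsilon/2)$ stays uniformly (in $i$) small and tends to $0$ as $\repsilon\to0$. What makes this work is structural: the comparison factor ${\rm e}^{\mv_{*,i}}$ is finite at the neck $\hrho=0$ and blows up only at the far ends $\hrho=\pm1$, so $\tan(\pi\hrho/2)$ contributes nothing at the lower endpoint, while the $1/\log(\rhozero i)$ prefactor in \eqref{NoS25XI21.6} exactly cancels the logarithmic growth of $\tan(\pi\hrho_\repsilon/2)$ as $i\to\infty$, leaving a quantity governed by $1/\log(1/\repsilon)$. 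Finally, I would stress that this lemma only exhibits points where $\mv$ is controlled from one side or the other; promoting ``$\sup\ge C_2$ somewhere'' to a genuine pointwise lower bound on compact subsets of the punctured surface — the input needed for the elliptic estimates and for extracting the limiting conformal factor $\omega_1$ — is the task of the subsequent Harnack-type argument, not of this step.
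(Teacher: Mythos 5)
Your proposal is correct and follows essentially the same route as the paper: the key step in both is the decomposition of $A_{1,i}$ into a bulk term and a collar term, with the collar term controlled by the explicit barrier $\mathrm{e}^{\mv_{*,i}}$ via \eqref{NoS18XI21.1b}, and the bounds then read off from sandwiching $\int \mathrm{e}^{\mv_i}$ between $\mathrm{e}^{\inf\mv_i}$ and $\mathrm{e}^{\sup\mv_i}$ times the (fixed, finite) $h_{\kappa_1}$-area. The only cosmetic difference is that you obtain the upper bound on the infimum directly from Corollary~\ref{NoSC21XI21}, whereas the paper extracts it from the same area sandwich; both are valid and you note the area-based alternative yourself.
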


\begin{proof}
It holds
\begin{equation*}
 \int_{\Gamma(a,b)} {\rm e}^{\mv_{*,i} }\,{\rm d}\mu_{h_0}
 = -\frac{2 \pi ^2 \cot \Big(\frac{\pi
 \log (\rho/\rho_0 )}{\log
(\rho_0^2i^2)}\Big)}{\log
(\rho_0^2i^2)}\Bigg|^b_a,
\end{equation*}
and note that
\begin{equation}\label{NoS18XI21.1b}
 \int_{\Gamma(1/i,\repsilon/2)} {\rm e}^{\mv_{*,i} }\, {\rm d}\mu_{h_0}
 = \frac{2 \pi ^2 \cot \Big(\frac{\pi
 \log (2/(\rho_0\repsilon))}{\log
 (\rho_0^2i^2)}\Big)}{\log
 \big(\rho_0^2i^2\big)}
 \to_{i\to\infty}
 \frac{2\pi }{\log \big(2/(\rho_0\repsilon)\big)}
 \to_{\repsilon\to 0} 0.
\end{equation}
We have, for all $\epsilon< 2\rhozero$ and for all $i$ larger than $2/\repsilon$,
\begin{align}
 A_{1,i}
 &=
 \int_{\zMone} {\rm e}^{\mv_i} \,{\rm d}\mu_{h_{\kappa_1}}= \int_{\red {\partial M_1 \setminus D(\repsilon/2)}} {\rm e}^{\mv_i}\, {\rm d}\mu_{h_{\kappa_1}}
 +
 \int_{\Gamma(1/i,\repsilon/2) } {\rm e}^{\mv_i}\, {\rm d}\mu_{h_{\kappa_1}}
 \nonumber
\\
 &=
 \int_{\red{\partial M_1 \setminus D(\repsilon/2)}} {\rm e}^{\mv_i}\, {\rm d}\mu_{h_{\kappa_1}}
 +
 \int_{\Gamma(1/i,\repsilon/2) } {\rm e}^{\hmv_i}\, {\rm d}\mu_{h_{0}}.\label{NoS6X21.2ax}
\end{align}
The estimate \eqref{NoS9XI21.2} shows that
\begin{gather*}
\int_{\Gamma(1/i,\repsilon/2) } {\rm e}^{\hmv_i} \,{\rm d}\mu_{h_0}
 \le
 \int_{\Gamma(1/i,\repsilon/2) } {\rm e}^{\mv_{*,i} }\, {\rm d}\mu_{h_0}. 
\end{gather*}
It follows from \eqref{NoS18XI21.1b} that there exists $\repsilon_0$ such that for all $2/{i} <\repsilon\le \repsilon_0$ the last term in \eqref{NoS6X21.2ax} is in $(0,A_{1,i}/2)$,
which implies
\begin{eqnarray*}
 \frac 12 A_{1,i} \le \int_{\red{\partial M_1 \setminus D(\repsilon/2)}} {\rm e}^{\mv_i} \,{\rm d}\mu_{h_{\kappa_1}}
 < A_{1,i}. 
\end{eqnarray*}
The conclusion readily follows from
\begin{gather*}
 \int_{\red{\partial M_1 \setminus D(\repsilon/2)}} {\rm e}^{\inf \mv_i }\, {\rm d}\mu_{h_{\kappa_1} }
 \le \int_{\red{\partial M_1 \setminus D(\repsilon/2)}} {\rm e}^{\mv_i} \,{\rm d}\mu_{h_{\kappa_1} }
 \le \int_{\red{\partial M_1 \setminus D(\repsilon/2)}} {\rm e}^{\sup \mv_i} \,{\rm d}\mu_{h_{\kappa_1}}.
 \tag*{\qed}
\end{gather*}
\renewcommand{\qed}{}
\end{proof}

We are ready now to prove the equivalent of Lemma~5.9 of \cite{CDW} on the summand chosen as in~\eqref{4VI22.1}.

\begin{Lemma} \label{NoSL23X21.2}
Under \eqref{4VI22.1}, there exists a smooth function
\[
 \mv_\infty\colon \ \partial M_1\setminus \{p_1\} \to \R
\]
such that a subsequence of $\{\mv_{i_j}\}_{j\in\N}$ converges uniformly to $\mv_\infty$ on every compact subset of $\partial M_1\setminus \{p_1\}$.
Similarly derivatives of any order of $\mv_{i_j}$ converge to derivatives of $\mv_\infty$, uniformly on every compact subset of $\partial M_1\setminus \{p_1\}$.
\end{Lemma}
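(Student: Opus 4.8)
The plan is to derive uniform two-sided $C^0$ bounds on the $\mv_{i_j}$ over compact subsets of $\partial M_1\setminus\{p_1\}$, to upgrade these to uniform $C^\infty$ bounds by elliptic regularity, and to conclude by Arzel\`a--Ascoli together with a diagonal argument. Throughout we work on the fixed summand $\partial M_1$ equipped with its fixed constant-curvature metric $h_{\kappa_1}$, $\kappa_1\in\{0,-1\}$ (in the one-point gluing neither summand is a sphere), relative to which $\mv_i$ satisfies the Liouville-type equation
\[
 \Delta_{h_{\kappa_1}}\mv_i = 2{\rm e}^{\mv_i} + 2\kappa_1
\]
displayed above. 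Fix a compact $K\subset\partial M_1\setminus\{p_1\}$, say $K\subset\partial M_1\setminus D(\rho_1)$ with $\rho_1>0$. Corollary~\ref{NoSC21XI21} already provides a constant $\hat c(\rho_1)$, independent of $i$, with $\mv_i\le\hat c(\rho_1)$ on $\partial M_1\setminus D(\rho_1)$; in particular ${\rm e}^{\mv_i}$ is bounded there uniformly in $i$.

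For the lower bound fix $\rho_2\in(0,\rho_1)$ with $\rho_2<\repsilon/2$, where $\repsilon$ is as in Lemma~\ref{NoSL23X21.1}, and restrict to $i$ so large that $\mv_i$ is defined on the fixed compact connected set $\Omega:=\partial M_1\setminus D(\rho_2)$. Setting $v_i:=\hat c(\rho_2)-\mv_i\ge 0$, we have
\[
 \Delta_{h_{\kappa_1}}v_i = -2{\rm e}^{\mv_i}-2\kappa_1,
\]
whose right-hand side is bounded in $L^\infty(\Omega)$ uniformly in $i$ because $0<{\rm e}^{\mv_i}\le {\rm e}^{\hat c(\rho_2)}$ there. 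By Lemma~\ref{NoSL23X21.1}, for all large $j$ there is a point $x_j\in\partial M_1\setminus D(\repsilon/2)\subset\Omega$ with $\mv_{i_j}(x_j)\ge C_4-1$, i.e.\ $v_{i_j}(x_j)\le\hat c(\rho_2)-C_4+1$. Applying the interior Harnack inequality for nonnegative solutions of $\Delta_{h_{\kappa_1}}v=f$ with $f$ bounded, along Harnack chains of uniformly bounded length in the connected set $\Omega$ joining $x_j$ to $K$, yields $\sup_K v_{i_j}\le C$, with $C$ independent of $j$ since the operator $\Delta_{h_{\kappa_1}}$, the set $\Omega$, and the $L^\infty$ bound on the source carry no $i$-dependence and since $x_j,K$ lie in a fixed compact subset of the interior of $\Omega$. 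Hence $\mv_{i_j}\ge \hat c(\rho_2)-C=:C_6$ on $K$.

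With the two-sided bound $C_6\le\mv_{i_j}\le\hat c(\rho_1)$ on $K$, the source $2{\rm e}^{\mv_{i_j}}+2\kappa_1$ is bounded in $L^\infty(K)$, hence in every $L^p$. Interior $L^p$ estimates give a uniform $W^{2,p}_{\mathrm{loc}}$, hence $C^{1,\alpha}_{\mathrm{loc}}$, bound, and a Schauder bootstrap then produces uniform $C^{k,\alpha}_{\mathrm{loc}}$ bounds for every $k$. Exhausting $\partial M_1\setminus\{p_1\}$ by compact sets and diagonalising, Arzel\`a--Ascoli extracts a subsequence of $\{\mv_{i_j}\}$ converging in $C^\infty_{\mathrm{loc}}$ to a smooth limit $\mv_\infty$; passing to the limit in the equation shows $\Delta_{h_{\kappa_1}}\mv_\infty=2{\rm e}^{\mv_\infty}+2\kappa_1$, so the limit is as claimed.

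The main obstacle is the uniform lower bound. The upper bound is essentially free from the explicit barrier ${\rm e}^{\mv_{*,i}}$, but a priori nothing prevents the conformal factors from degenerating to $-\infty$ away from the puncture; this is ruled out only by feeding the non-concentration of area near the neck, i.e.\ Lemma~\ref{NoSL23X21.1} (itself a consequence of that same barrier), into the Harnack inequality. The delicate point is that the Harnack constant must be chosen independently of $j$ even though the point $x_j$ at which the area forces $\mv_{i_j}$ to be large may drift with $j$; this is handled by working throughout on the single fixed compact connected set $\Omega$, on which the background metric and the elliptic operator are $i$-independent, so that one Harnack-chain constant serves all large $j$.
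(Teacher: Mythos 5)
Your proposal is correct and follows the same overall strategy as the paper: upper bound from the explicit barrier (Corollary~\ref{NoSC21XI21}), lower bound by feeding the conclusion of Lemma~\ref{NoSL23X21.1} (itself coming from non-concentration of area at the neck) into a Harnack inequality on a fixed compact connected set away from the puncture, and then elliptic bootstrap plus diagonalisation. The only real divergence is in how the Harnack step is set up. The paper first constructs an auxiliary function $\psi$ with $\Delta_{h_{\kappa_1}}\psi=-2+c\,\delta_{p_1}$ (via a Green's-function/solvability argument that occupies a good part of its proof), so that $\hat v_i=\psi-\mv_i-c_1+1\ge 1$ satisfies a \emph{homogeneous} equation $\Delta_{h_{\kappa_1}}\hat v_i=\psi_i\hat v_i$ with a uniformly bounded coefficient $\psi_i$, to which the classical Harnack inequality is applied. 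You instead apply the Harnack inequality with an inhomogeneous right-hand side directly to the nonnegative function $\hat c(\rho_2)-\mv_i$, whose source $-2{\rm e}^{\mv_i}-2\kappa_1$ is uniformly bounded by the upper barrier; this bypasses the construction of $\psi$ entirely and is slightly more direct, at the cost of invoking the (equally standard) inhomogeneous form of Harnack. Your attention to the $j$-independence of the Harnack constant — fixing the connected set $\Omega$, keeping $x_j$ and $K$ a definite distance from $\partial\Omega$ — addresses exactly the point the paper handles by fixing $K_1=\partial M_1\setminus D(\rho_K/2)$. I see no gap.
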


\begin{proof}
The proof is an adaptation to our setting of the arguments given in~\cite{CDW}, we present here the details for completeness.

We will need a function $\psi\in C^\infty\big(M_1\setminus \{p_1\}\big)$ which satisfies the equation
\begin{equation}\label{1VI22.41}
 \Delta_{h_{\kappa_1}} \psi = -2 + c \delta_{p_1},
\end{equation}
where $\delta_{p_1}$ is the Dirac measure centered at $p_1$, with $c$ equal to twice the $h_{\kappa_1}$-area of $\partial M_1$. This choice of $c$ ensures existence of $\psi$, which can be seen as follows:
Let $(\rho,\varphi)$ be any coordinates near $p_1$ such that $h_{\kappa_1} = {\rm e}^{\omega_1}\big( {\rm d}\rho^2 + \rho^2 {\rm d}\varphi^2\big)$ there. Let $\psi_1\in C^\infty\big(\partial M_1\setminus \{p_1\}\big)$ be any function which equals $\ln \rho$ near $p_1$. There exists a constant $c_1\ne 0 $ such
that
\[
f:= \Delta_{h_{\kappa_1}} \psi_1 -c_1 \delta_{p_1}\in C^\infty(\partial M_1).
\]
Letting $\langle T,f\rangle$ denote the action of a distribution $T$ on a smooth function $f$, we have
\begin{equation*}
 0 = \big\langle \psi_1, \Delta_{h_{\kappa_1}} 1\big\rangle
 = \big\langle \Delta_{h_{\kappa_1}} \psi_1, 1\big\rangle
 = \int_{\partial M_1} f \, {\rm d}\mu_{h_{\kappa_1} } + c_1,
\end{equation*}
where $d\mu_{h_{\kappa_1} }$ is the measure associated with the metric ${h_{\kappa_1} }$.
Hence
\begin{equation*}
 \int_{\partial M_1} f \, {\rm d}\mu_{h_{\kappa_1} } = -c_1.
\end{equation*}
Consider the equation
\begin{equation*}
 \Delta_{h_{\kappa_1}} \psi_2 = -2 -\frac{c}{c_1} f.
\end{equation*}
By choice of $c$ the right-hand side has zero-average over $\partial M_1$, which guarantees existence of a~smooth function $\psi_2$ solving the equation. Then{\samepage
\begin{equation*}
 \psi := \frac{c}{c_1} \psi_1 + \psi_2
\end{equation*}
solves \eqref{1VI22.41}.}

To continue, as in \cite{CDW} we let $K$ be any compact subset of $\partial M_1\setminus \{p_1\}$. There exists $\rho_K>0$ such that $K\subset \partial M_1\setminus D(\rho_K)$. It thus suffices to prove the result with $ K=\partial M_1\setminus D(\rho_K)$, which will be assumed from now on.

Let $K_1 = \partial M_1\setminus D(\rho_K/2)$. Taking $\repsilon=\rho_K $ in Lemma~\ref{NoSL23X21.1} ensures that \eqref{NoS15VIII21.21b} holds on $K_1$ for all $i\ge i_1 $ for some $i_1<\infty$.

Let $i\ge i_1$. By Corollary~\ref{NoSC21XI21},
there exists a constant $c_1$, independent of $i$, such that
\begin{equation*}
\red{v_i}:=\psi -\mv_i \ge c_1 \quad \text{on} \ K_1.
\end{equation*}
On $K_1$ it also holds
\begin{equation}\label{NoS24X21.1b}
 c_2 \le \psi \le c_3,
\end{equation}
for some constants $c_2$ and $c_3$.
Define
\[
 \red{\hat v_i} : = \red{v_i} -c_1 +1.
\]
It holds that $\red{\hat v_i} \ge 1$ on $K_1$.

Moreover, $\red{\hat v_i}$ satisfies the equation
\begin{equation*}
 \Delta_{ h_{\kappa_1}} \red{\hat v_i} = \red{\psi_i}\red{\hat v_i},
\end{equation*}
where
\begin{equation*}
0 \ge \red{\psi_i} : =- 2 \frac{{\rm e}^{\mv_i}}{ \red{\hat v_i}}
 =- 2 \frac{{\rm e}^\psi {\rm e}^{\mv_i-\psi }}{ \red{\hat v_i}}
\ge - 2 {\rm e}^{c_3} {\rm e}^{\mv_i-\psi }
 \ge -2 {\rm e}^{c_3-c_1}.
\end{equation*}

By Harnack's inequality, there exists a constant $C_1= C_1(K,K_1) >0$ such that on $K$ we have
\begin{equation*}
 \sup_K \red{\hat v_i} \le C_1 \inf_{K_1} \red{\hat v_i}.
\end{equation*}
This, together with the definition of $\red{\hat v_i}$, shows that
\begin{equation}\label{NoS6VIII21.11a}
 \sup_K \red{v_i} \le C_1 \inf_{K_1} \red{v_i} +d_1 =
 C_1 \big({-}\sup_{K_1}( {\mv_i}-\psi) \big) +d_1,
\end{equation}
for some constant $d_1$.
Equation~\eqref{NoS15VIII21.21b} shows that there exists a constant $c_4$ such that
\begin{equation*}
- \sup_{K_1} (\mv_i -\psi)
 \le c_4.
\end{equation*}
From \eqref{NoS6VIII21.11a} we obtain
\begin{equation*}
 \sup_K (\red{v_{i }}- \psi) \le C_1 c_4 +d_1
 \
 \Longrightarrow
 \
 \inf_K \red{\mv_{i }} = - \sup_K (\red{v_{i }}- \psi) \ge -(C_1 c_4 +d_1).
\end{equation*}
This, together with \eqref{NoS24X21.1b}, shows that
that for every compact subset $K$ of $\partial M_1\setminus \{p_1\}$
there exists a constant $\hat C_K$ such that
\begin{equation*}
 - \hat C_K \le \red{\mv_{i }} \le \hat C_K.
\end{equation*}
Elliptic estimates, together with a standard diagonalisation argument, show that there exists a subsequence $\mv_{i_{j }}$ which
converges uniformly on every compact subset of $\partial M_1 \setminus \{p_1\}$ to a~solution~$\mv_\infty $
of \eqref{NoS31VII21.32} on $\partial M_1 \setminus \{p_1\}$. Convergence of derivatives follows again from elliptic estimates.
\end{proof}

To continue, we wish to show that $A_2\ne 0$. As a step towards this we claim that
\begin{equation}\label{4VI22.11rd}
 A_1 = \int_{\partial M_1} {\rm e}^{\mv_\infty}\, {\rm d}\mu_{h_{\kappa_1} }.
\end{equation}
In order to prove \eqref{4VI22.11rd}, for any $\repsilon>0$ we can write
\begin{align}
 \int_{\partial M_1} {\rm e}^{\mv_\infty}\, {\rm d}\mu_{h_{\kappa_1} }
 &=
 \int_{\partial M_1\setminus D(\repsilon/2)} {\rm e}^{\mv_\infty}\, {\rm d}\mu_{h_{\kappa_1} }
 +
 \int_{ D(\repsilon/2)} {\rm e}^{\mv_\infty} \,{\rm d}\mu_{h_{\kappa_1} }
 \nonumber
\\
 &=
 \int_{\partial M_1\setminus D(\repsilon/2)}\!\! ( {\rm e}^{\mv_\infty} - {\rm e}^{\mv_i} )\, {\rm d}\mu_{h_{\kappa_1} }\!
 +
 \underbrace{
 \int_{\partial M_1\setminus D(\repsilon/2)}\!\! {\rm e}^{\mv_i}\, {\rm d}\mu_{h_{\kappa_1} }\!
 +
 \int_{\Gamma(1/i,\repsilon/2) }\!\! {\rm e}^{\mv_i}\, {\rm d}\mu_{h_{\kappa_1}}
 }_{=A_{1,i}}
 \nonumber
\\
&\quad{}
 -
 \int_{\Gamma(1/i,\repsilon/2) } {\rm e}^{\mv_i}\, {\rm d}\mu_{h_{\kappa_1}}
 +
 \int_{ D(\repsilon/2)} {\rm e}^{\mv_\infty}\, {\rm d}\mu_{h_{\kappa_1} }. \label{8VI22.41}
\end{align}

Recall that
\begin{align}
 \int_{\Gamma(1/i,\repsilon/2) } {\rm e}^{\mv_i}\, {\rm d}\mu_{h_{\kappa_1}}
 & =
 \int_{\Gamma(1/i,\repsilon/2) } {\rm e}^{\hmv_i}\, {\rm d}\mu_{h_0}
\nonumber
\\
 &\le
 \int_{\Gamma(1/i,\repsilon/2) } {\rm e}^{\mv_{*,i} } \,{\rm d}\mu_{h_0}
 \to_{i\to\infty}
 \frac{2\pi }{\log (2/(\rho_0\repsilon))}
 \to_{\repsilon\to 0}
 0. \label{NoS6X21.2asf}
\end{align}

Let $\eta>0$. The last term in \eqref{8VI22.41} will be smaller than $\eta/4$ for all $\varepsilon$ small enough because ${\rm e}^{\omega_\infty}\in L^1(\partial M_1)$. Equation~\eqref{NoS6X21.2asf} shows that we can reduce $\varepsilon >0$ if necessary so that for all $i$ large enough the first term in the last line of \eqref{8VI22.41} will be smaller than $\eta/4$. Since $\omega_i $ converges to $\omega_\infty$ uniformly on the compact subset $\partial M_1\setminus D(\epsilon/2)$ of $\partial M_1\setminus \{p_1\}$, the first term in the second line of \eqref{8VI22.41} is smaller than $\eta/4$ for all $i$ large enough. For $j$ large enough it holds that $|A_{1,i_j} - A_1| \le \eta/4$.
We conclude that with the choices just made we have
\begin{equation*}
 \left|A_1 - \int_{\partial M_1} {\rm e}^{\mv_\infty}\, {\rm d}\mu_{h_{\kappa_1} }\right| \le \eta.
\end{equation*}
As $\eta$ is arbitrary, \eqref{4VI22.11rd} follows.

By Remark~\ref{NoSR22V22.1} and Deligne--Mumford compactness (cf., e.g., \cite[Proposition~A.2, Appendix~A.1]{RupflinToppingZhu}), the metric ${\rm e}^{\omega_\infty}h_{\kappa_1}$ is the punctured hyperbolic metric on $\partial M_1\setminus \{p_1\}$.
The Gauss--Bonnet theorem applies to such metrics and gives
\begin{equation*}
 A_1 = \red{2\pi}
 \big(
 1-\chi(\partial M_1)
 \big).
\end{equation*}
Passing to the limit $j\to\infty$ in the Gauss--Bonnet identity,
\begin{equation*}
 A_{1,i_j } + A_{2,i_j }
 =\int_{\zMone } {\rm e}^{\mv_{i_j} }\, {\rm d}\mu_{h_{\kappa_1}} +
 \int_{\zMtwo } {\rm e}^{\mv_{i_j}}\, {\rm d}\mu_{h_{\kappa_2}}
 = \red{2\pi}
 \big(
 2-\chi(\partial M_1) - \chi(\partial M_2)
 \big),
\end{equation*}
one obtains
\begin{equation*}
 A_2 = \red{2\pi}
 \big(
 1-\chi(\partial M_2)
 \big)
 {>0}
\end{equation*}
{since, by hypothesis, neither summand is a sphere. (Note that this argument fails for two-components gluing with one puncture at each summand and with one or two spherical summands.)}
So $A_2\ne 0$, and existence of a limiting conformal factor, realising a punctured metric on $\partial M_2\setminus \{p_2\}$, follows as before.

We have therefore established the equivalent of Lemma~5.9 of \cite{CDW} for both summands of the gluing construction.
The rest of the proof of Theorem~\ref{NoST29VII21.1}
is as in~\cite{CDW}.
\end{proof}

\section{Instabilities?} \label{s11VI22.1}

We start with the following observation:
Consider a pair of two-dimensional hyperbolic manifolds $\big(\twodimN _a,h_a\big)$, $a=1,2$ with a boundary (at finite distance) $\partial \twodimN _a$ with zero-mean curvature. Thus, the boundaries are closed curves which satisfy the geodesic equation. Suppose that the lengths of the boundary curves coincide.
Any two such manifolds can be identified at that boundary to yield a hyperbolic manifold in which the boundary curves become a closed geodesic.

The above allows us to provide a \emph{construction kit} for producing nontrivial manifolds with constant scalar curvature, higher-genus topology at infinity, and mass which is additive under Maskit gluing.

The simplest collection of the relevant building blocks is obtained as follows: Let $(M,g)$ be any ALH manifold with genus $\mathbf{g}_{\infty} \ge 1$. Let us denote by $\partial_\infty M$ the conformal boundary at infinity of $M$ and let $p\in \partial_\infty M$. We can carry out the construction of the proof of Theorem~\ref{NoST29VII21.1} by a gluing-at-infinity of a copy of $(M,g)$ to itself in a symmetric way, as in \cite{CDW},
and where the discrete parameter $1/i\to 0$ is replaced by a continuous parameter $\lambda \to 0$. There results a family of constant-scalar-curvature manifolds $(M_\lambda, g_\lambda)$ which are exactly hyperbolic in a neighborhood of a
totally geodesic two-dimensional half-sphere $\partial \mcU_{\lambda}$
 cutting $(M_\lambda,g_\lambda) $ in half
(in the example of Figure~\ref{F26VIII21.2} this is the half-sphere $\partial \mcU_{1,\epsilon} $ with $\epsilon=\lambda$). The resulting cut-in-half manifolds will be referred to as \emph{building blocks}.
The half-spheres $\partial \mcU_{\lambda}$ have a boundary at infinity which is a~closed geodesic $\gamma_\lambda$ cutting $\partial_\infty M_\lambda $
 in half, with the length $\ell_\lambda$ of $\gamma_\lambda$ varying continuously with~$\lambda$ by \cite[Remark~5.11]{CDW}, and with $\ell_\lambda$ tending to zero as $\lambda$ does.

Any building blocks with matching lengths of their closed geodesics at infinity can be joined together at the boundaries $\partial \mcU_{\lambda}$ to obtain a CSC ALH manifold with compact negatively curved conformal boundary. It should be clear from \eqref{18IX20.4} that the mass of the manifold so obtained
is the sum of the masses of the building blocks. Depending upon the sign of the mass of the second summand the new manifold can have a mass larger or smaller than that of the first summand.

The fact that a summand with positive mass can be replaced by one with negative one, thus lowering the mass of the connected manifold without changing the geometry of the other summand away from a small subset of its exactly hyperbolic region, suggests instability under time evolution:

\begin{Conjecture} \label{C12VI22.1}
 Three-dimensional CSC ALH manifolds with higher genus topology at conformal infinity and thin necks at conformal infinity are unstable.
\end{Conjecture}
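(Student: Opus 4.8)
The plan is to regard each such manifold $(M,g)$ as time-symmetric Cauchy data for the vacuum Einstein equations with cosmological constant $\Lambda<0$, and to establish instability in the Hamiltonian sense: that there exist perturbations of the initial data, arbitrarily small in a norm compatible with the ALH boundary conditions (in particular preserving the topology and conformal class at infinity), whose maximal development leaves any fixed neighborhood of the background development. I would proceed variationally, the guiding principle being that a time-symmetric equilibrium configuration which fails to locally minimize the conserved energy on the constraint manifold is dynamically unstable. The renormalized mass $m(\partial M)$ of \eqref{18IX20.4} is the relevant conserved energy, and the construction underlying Theorem~\ref{NoST29VII21.1} already exhibits nearby data of strictly smaller mass.

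\textbf{The energy-decreasing direction.} The key input is the mass-lowering move described above: replacing one building block of positive mass by one of negative mass alters the geometry only inside a small subset of an exactly-hyperbolic region near a neck, leaving the boundary data and the complementary summand untouched. First I would turn this into an infinitesimal statement, producing a transverse-traceless--type deformation $\delta g$ supported near the neck $\gamma_\lambda$ along which the second variation of the renormalized mass, restricted to the constraint manifold modulo diffeomorphisms, is strictly negative. Thinness of the neck is precisely what makes this possible: as $\ell_\lambda\to 0$ the relevant eigenvalue of the second-variation form should cross zero, so that for small $\lambda$ the background is a saddle rather than a minimum of the energy.

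\textbf{From saddle to instability.} I would then pass from the indefinite second variation to a genuine dynamical instability through the symplectic structure of the reduced phase space: for a Hamiltonian flow $\dot x = J\,\nabla H(x)$, a direction of strictly negative $D^2H$ at an equilibrium generically produces an eigenvalue of $J\,D^2H$ with positive real part, hence an exponentially growing linear mode, which is then upgraded to nonlinear instability by the usual bootstrap. The geometric counterpart of this mechanism is transparent: since the data is time-symmetric, the totally geodesic neck $\partial\mcU_\lambda$ is a minimal, hence marginally trapped, surface, and the instability should drive it to become trapped under perturbation, forcing the thin neck to pinch and a singularity (or black hole) to form for arbitrarily small perturbations.

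\textbf{Main obstacle.} The principal difficulty is making the passage from non-minimization of energy to dynamical instability rigorous within general relativity. One must work on the constraint manifold modulo the diffeomorphism gauge, fix a gauge adapted to the ALH ends (e.g.\ a wave gauge), and verify that the growing mode produced above is not pure gauge and survives at the nonlinear level; the indefiniteness of the gravitational energy makes the standard ``saddle $\Rightarrow$ unstable'' dichotomy delicate. Compounding this, the $\Lambda<0$ setting has a reflecting conformal boundary, so energy cannot be radiated to infinity and the notion of instability depends sensitively on what is held fixed at $\partial_\infty M$. Given that even the long-time dynamics of asymptotically anti-de~Sitter spacetimes is only partially understood, a complete proof of linear---let alone nonlinear---instability appears out of reach at present, which is why the statement is recorded as a conjecture.
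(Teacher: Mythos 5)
This statement is a \emph{conjecture}: the paper offers no proof of it, only the heuristic motivation given in Section~\ref{s11VI22.1}, namely that one can replace a positive-mass building block by a negative-mass one without changing the geometry of the other summand away from a small subset of its exactly hyperbolic region, which the authors say ``suggests'' instability. Your proposal correctly latches onto that same heuristic and your closing admission that a rigorous argument is out of reach is precisely why the statement is recorded as a conjecture rather than a theorem; so there is no proof in the paper against which your attempt can be checked, and your text should be read as an (honest) strategy sketch rather than a proof.

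That said, there is a concrete gap in the strategy itself, beyond the well-known difficulty of upgrading ``saddle point of the energy'' to dynamical instability for the Einstein flow. The mass \eqref{18IX20.4} is a pure boundary term, computed as a limit at $r\to\infty$ on each end; a transverse-traceless deformation $\delta g$ ``supported near the neck'' is compactly supported in the interior and therefore does not change the mass at all, to any order. So there is no direction of the kind you describe along which the second variation of the mass is strictly negative. The paper's mass-lowering move is not a small or localized perturbation: it replaces an entire summand, \emph{including its conformal infinity}, by a different manifold that merely agrees with the original near the gluing neck. Turning that global swap into an infinitesimal, gauge-invariant, constraint-respecting negative mode of some genuine Lyapunov functional is exactly the missing content of the conjecture; additionally, in this higher-genus setting the positive energy theorem fails, so there is no ground state relative to which an energy argument could even be anchored. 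Your final paragraph identifies these obstacles accurately, but they are obstacles to the whole approach, not technical details to be filled in later.
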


\subsection*{Acknowledgements}

ED was supported by the grant ANR-17-CE40-0034 of the French National Research Agency ANR (project CCEM).

\pdfbookmark[1]{References}{ref}
\LastPageEnding

\end{document}